\documentclass[a4paper,11pt]{article}
\pdfoutput=1
\usepackage{jheppub}
\usepackage[T1]{fontenc}

\usepackage{subcaption}
\usepackage{amsthm}
\usepackage{array}
\newtheorem{prop}{Proposition}[section]

\title{Narain CFTs and error-correcting codes on finite fields}

\author{Shinichiro Yahagi}
\affiliation{Department of Physics, The University of Tokyo, \\
7-3-1, Hongo, Bunkyo-ku, Tokyo 113-0033, Japan}
\emailAdd{yahagi@hep-th.phys.s.u-tokyo.ac.jp}

\abstract{We construct Narain CFTs from self-dual codes on the finite field $F_p$ through even self-dual lattices for any prime $p>2$. Using this correspondence, we can relate the spectral gap and the partition function of the CFT to the error correction capability and the extended enumerator polynomial of the code. In particular, we calculate specific spectral gaps of CFTs constructed from codes and compare them with the largest spectral gap among all Narain CFTs.}

\begin{document}
\maketitle
\flushbottom

\section{Introduction}
In high energy physics, conformal field theories (CFTs) are widely used such as for describing string theory. On the other hand, in information theory, error-correcting codes enable us to transmit information correctly in spite of possible errors. Although their theoretical motivations are very different, their properties have many mathematical similarities.

In fact, the relation between classical error-correcting codes on $F_2=\{0,1\}$ and chiral CFTs through Euclidean lattices has long been known. The symmetries and quantities of CFTs have been analyzed through codes \cite{crl}, and vertex operator algebras have been studied such as the correspondence among the Golay code, the CFT related to the Leech lattice and the Monster group \cite{VOA, 1004.0956}. Recently, the higher genus partition functions of CFTs constructed from codes were calculated to understand modular invariance at higher genus \cite{2112.05168}.

There is also a rich connection between quantum error-correcting codes and CFTs. The term ``quantum'' means transmitting a quantum state that can involve quantum entanglement among qubits. For example, the symmetries of quantum codes can be translated into the superconformal symmetries of corresponding CFTs \cite{2003.13700}. In addition, the properties of holography such as the AdS/CFT correspondence can be interpreted in the language of quantum codes and tensor networks \cite{1411.7041, 2102.02619}.

As discussed later, we focus on the spectrum in this paper. The spectrum of two-dimensional CFTs has been studied numerically and analytically using the modular invariance of CFTs and linear functional methods, which is related to the sphere packing problem \cite{1307.6562, 1608.06241, 1905.01319, 2006.02560}. In particular, the maximum value of the spectral gaps that Narain CFTs can have was evaluated numerically using the modular bootstrap, and analytical methods to get the lower bound are proposed \cite{2006.04839}. The spectrum itself is an essential quantity for the CFT, and it is also related to states in the corresponding theory of gravity through holography \cite{2012.15830}.

Recently, the relation between classical error-correcting codes on $F_2$ or $F_4=\{0,1,\omega,\omega^2\}$ and Narain CFTs through lattices on $\mathbb{R}^{n,n}$ have been studied and quantities in CFTs such as spectral gaps and partition functions were described in the language of codes \cite{2009.01244,2107.02816}. In this paper, we extend the relation to codes on $F_p$ for any prime $p>2$ and more general Narain CFTs. The relation between CFTs, lattices and codes is roughly summarized in Table \ref{tab:relation}. In addition, we calculate spectral gaps of corresponding Narain CFTs for small $p$ and compare them with known optimal values in \cite{2006.04839}. Though it is generally difficult to solve the problem of finding the spectral gap for a given CFT, we can calculate it rigorously using the construction from codes.

\begin{table}[tbp]
	\caption{The relation between CFTs, lattices and codes}
	\label{tab:relation}
	\centering
	\begin{tabular}{c||wc{0.25\textwidth}|wc{0.25\textwidth}|wc{0.25\textwidth}}
	& CFT & Lattice $\Lambda$ & Code $\mathcal{C}$ \\	\hline \hline
	& modular invariance & even self-dual & self-dual \\ \hline
	$n$ & central charge & dimension & length \\ \hline
	$p$ & \begin{tabular}{c} compactification\\radii $\sqrt{2/p}$ \end{tabular} & $\sqrt{p}\mathbb{Z}^{2n}\subset\Lambda$ & \begin{tabular}{c} on the finite field\\with $p$ elements \end{tabular} \\ \hline
	& spectral gap & minimum length & correction capability \\ \cline{2-4}
	& partition function & & enumerator polynomial \\ \hline
	\end{tabular}
\end{table}

The outline of the paper is as follows. In section \ref{sec:pre}, we review the background knowledge about Narain CFTs, lattices and codes. We clarify the properties ``even'' and ``self-dual'', which play important roles in establishing the relation.
Section \ref{sec:relation} contains the main result of this paper. In subsection \ref{sec:construction_flat}, we construct Narain CFTs from self-dual codes on $F_p^{2n}$ through even self-dual lattices on $\mathbb{R}^{n,n}$ for any prime $p>2$. The corresponding CFTs have $n$ scalars, radii $\sqrt{2/p}$, the flat metric and antisymmetric tensor backgrounds with integer values (hereinafter called integer backgrounds). In subsection \ref{sec:quantities}, we relate the spectral gap and the partition function in the CFT to the error correction capability and the enumerator polynomial in the corresponding code. Using this relation, we calculate specific spectral gaps of Narain CFTs constructed from codes for small $p,n$.
In subsection \ref{sec:construction_general}, we introduce a more general construction that covers Narain CFTs with curved metrics and non-integer backgrounds. Since the relation between quantities is not concise in this case, we give a few important examples.

\section{Preliminaries} \label{sec:pre}
This section deals with the background knowledge about Narain CFTs, lattices and codes needed for later discussions to understand the relation in Table \ref{tab:relation}.

\subsection{Narain CFTs}
A Narain CFT is described by $n$ scalars $X^i,i=1,\dots,n$ on a torus $T^n$ with radii $R$ \cite{narain, 9401139}. We follow the notations of \cite{polchinski}. A world-sheet action $S$ is
\begin{equation}
	S = \frac{1}{4\pi\alpha'} \int dt \int_0^{2\pi} d\sigma \left[ G_{ij} (\partial_tX^i\partial_tX^j-\partial_\sigma X^i\partial_\sigma X^j) - 2B_{ij} \partial_tX^i\partial_\sigma X^j \right]
\end{equation}
where $G_{ij}$ is a metric (symmetric) and $B_{ij}$ is an antisymmetric tensor background, both of which are constants.
From the equation of motion, the mode expansion of $X^i$ is
\begin{align}
	X^i(t,\sigma) &= X_L^i(t-\sigma) + X_R^i(t+\sigma), \label{eq:X} \\
	X_L^i(t-\sigma) &= x_L^i + \frac{\alpha'}{2}p_L^i(t-\sigma) + i\sqrt{\frac{\alpha'}{2}}\sum_{n\in\mathbb{Z}\setminus\{0\}}\frac{\alpha_n^i}{n}e^{-in(t-\sigma)}, \label{eq:X_L} \\
	X_R^i(t+\sigma) &= x_R^i + \frac{\alpha'}{2}p_R^i(t+\sigma) + i\sqrt{\frac{\alpha'}{2}}\sum_{n\in\mathbb{Z}\setminus\{0\}}\frac{\tilde{\alpha}_n^i}{n}e^{-in(t+\sigma)}. \label{eq:X_R}
\end{align}
From the compactification, they should satisfy
\begin{gather}
	X^i(t,\sigma) - X^i(t,\sigma+2\pi) = 2\pi R w^i, \label{eq:wind} \\
	P_i := \frac{\partial L}{\partial (\partial_t X^i)} = \frac{1}{R}m_i \label{eq:momentum}
\end{gather}
where $\vec{w}=(w^1,\dots,w^n)^T,\vec{m}=(m_1,\dots,m_n)^T\in\mathbb{Z}^n$ and $S=\int dt L$. By substituting (\ref{eq:X}),(\ref{eq:X_L}),(\ref{eq:X_R}) for (\ref{eq:wind}),(\ref{eq:momentum}), the eigenvalues $k_L,k_R$ of $p_L,p_R$ are
\begin{align}
	k_{Li} &= \frac{1}{R} m_i + \frac{R}{\alpha'}(B+G)_{ij}w^j, \\
	k_{Ri} &= \frac{1}{R} m_i + \frac{R}{\alpha'}(B-G)_{ij}w^j. 
\end{align}
We introduce a tetrad $e_i^\mu$ which satisfy $G_{ij}=e_i^\mu e_j^\nu \delta_{\mu\nu}$, its inverse $e_\mu^i$ and dimensionless momenta on orthogonal basis
\begin{align}
	l_{L\mu} &= e_\mu^i \sqrt{\frac{\alpha'}{2}} k_{Li} = e_\mu^i \left[ \frac{1}{r} m_i + \frac{r}{2} (B+G)_{ij}w^j \right], \label{eq:l_L} \\
	l_{R\mu} &= e_\mu^i \sqrt{\frac{\alpha'}{2}} k_{Ri} = e_\mu^i \left[ \frac{1}{r} m_i + \frac{r}{2} (B-G)_{ij}w^j \right] \label{eq:l_R}
\end{align}
where $r=R\sqrt{2/\alpha'}$ is a dimensionless radius. In the following, we use the momentum to refer to this dimensionless one.

The spectral gap and the partition function are important quantities in the Narain CFT. The spectral gap $\Delta$ of primary states, which is the energy difference between the ground state and the first nontrivial primary state, is
\begin{equation}
	\Delta = \min_{\substack{\vec{m},\vec{w}\in\mathbb{Z}^n \\ (\vec{m},\vec{w})\neq\vec{0}}} \frac{\vec{l}_L^2+\vec{l}_R^2}{2}
\end{equation}
when we set $\alpha'=2$. The partition function $Z(\tau)$, which is a trace of states weighted by the Hamiltonian and the momentum, is
\begin{equation}
	Z(\tau) = |\eta(\tau)|^{-2n} \sum_{\vec{m},\vec{w}\in\mathbb{Z}^n} q^{\vec{l}_L^2/2} \bar{q}^{\vec{l}_R^2/2}
\end{equation}
where $\eta(\tau)$ is the Dedekind eta function:
\begin{equation}
	\eta(\tau)=q^{1/24}\prod_{m=1}^\infty(1-q^m),\quad q=e^{2\pi i\tau},\quad \bar{q}=e^{-2\pi i\bar{\tau}}.
\end{equation}

\subsection{Lattices}
A lattice $\Lambda\subset\mathbb{R}^n$ is a subgroup of $\mathbb{R}^n$ which can be written as $\Lambda=\{\sum_{i=1}^n a_iv_i \mid a_i\in\mathbb{Z}\}$ where $v$ are basis of $\mathbb{R}^n$.  Much of the discussion in this and the next subsection depends on \cite{conway}. Since the properties ``even'' and ``self-dual'' are the core of our discussion, we clarify definitions.

\begin{itemize}
	\item ``even'': A lattice $\Lambda\subset\mathbb{R}^n$ is even with a metric $g$ if
	\begin{equation}
		\forall x\in\Lambda,\, x \cdot x \in 2\mathbb{Z}.
	\end{equation}
	\item ``self-dual'': For a lattice $\Lambda\subset\mathbb{R}^n$, a dual lattice with a metric $g$ is
	\begin{equation}
		\Lambda^\ast = \left\{ x'\in\mathbb{R}^n \,\middle|\, \forall x\in\Lambda, x\cdot x'\in\mathbb{Z} \right\} \subset \mathbb{R}^n.
	\end{equation}
	$\Lambda$ is self-dual if $\Lambda = \Lambda^\ast$.
\end{itemize}
Here, the dot ``$\cdot$'' means an inner product $:\mathbb{R}^n\times \mathbb{R}^n\to \mathbb{R}$ defined by $x\cdot y=\sum_{i,j=1}^n g_{ij} x_iy_j$.

It is convenient to represent lattices by matrices. An $n\times n$ matrix $G\in GL(n,\mathbb{R})$ is called a generator matrix of a lattice $\Lambda$ if
\begin{equation}
		\Lambda = \{ Gm \mid m\in\mathbb{Z}^n \}.
\end{equation}
In addition, $G, G'\in GL(n,\mathbb{R})$ are generator matrices of the same lattice if and only if there exists an $n\times n$ unimodular matrix $U$ s.t. $G'=GU$.

The momenta of the Narain CFT form a lattice
\begin{equation}
	\Lambda(r,G,B) = \left\{ \begin{pmatrix} \vec{l}_L \\ \vec{l}_R \end{pmatrix} \,\middle|\, \vec{m},\vec{w}\in\mathbb{Z}^n \right\} \subset \mathbb{R}^{2n}
\end{equation}
and we can easily prove that this lattice is even self-dual with a metric
\begin{equation}
	g' = \begin{pmatrix} I&0 \\ 0&-I \end{pmatrix}
\end{equation}
where $I$ is an $n\times n$ identity matrix. We can define another lattice by
\begin{equation}
	\Lambda_N(r,G,B) = \left\{ \begin{pmatrix} \alpha \\ \beta \end{pmatrix} \,\middle|\, \vec{m},\vec{w}\in\mathbb{Z}^n \right\} \subset \mathbb{R}^{2n}
\end{equation}
where
\begin{equation}
	\alpha = \frac{\vec{l}_L+\vec{l}_R}{\sqrt{2}}, \, 
	\beta = \frac{\vec{l}_L-\vec{l}_R}{\sqrt{2}}.
\end{equation}
This lattice is even self-dual with a metric
\begin{equation}
	g = \begin{pmatrix} 0&I \\ I&0 \end{pmatrix}
\end{equation}
since the inner product with $g$ of vectors in $\Lambda_N(r,G,B)$ is equal to the inner product with $g'$ of the corresponding vectors in $\Lambda(r,G,B)$. From \eqref{eq:l_L} and \eqref{eq:l_R}, $\alpha$ and $\beta$ can be written as
\begin{equation}
	\alpha_\mu = e_\mu^i \left( \frac{\sqrt{2}}{r} m_i + \frac{r}{\sqrt{2}} B_{ij}w^j \right), \  
	\beta_\mu = e_\mu^i \frac{r}{\sqrt{2}} G_{ij}w^j,
\end{equation}
and thus one generator matrix of $\Lambda_N(r,G,B)$ is
\begin{equation} \label{eq:lambda_N_gen_mat}
	\begin{pmatrix}
			\frac{\sqrt{2}}{r}\gamma^{-1} & \frac{r}{\sqrt{2}}\gamma^{-1}B \\
			0 & \frac{r}{\sqrt{2}} \gamma^T
	\end{pmatrix}
\end{equation}
where $(\gamma)_{i\mu} = e_i^\mu$. We will associate a code with this lattice in the next section. 

In the language of the lattice, the spectral gap $\Delta$ is
\begin{equation} \label{eq:spectral_gap}
	\Delta =  \min_{\substack{(\vec{l}_L,\vec{l}_R)\in\Lambda(r,G,B) \\ \vec{l}_L^2+\vec{l}_R^2\neq0}} \frac{\vec{l}_L^2+\vec{l}_R^2}{2}
	= \min_{\substack{(\alpha,\beta)\in\Lambda_N(r,G,B) \\ \alpha^2+\beta^2\neq0}} \frac{\alpha^2+\beta^2}{2}
\end{equation}
and the partition function $Z(\tau)$ is
\begin{align}
	Z(\tau) &= |\eta(\tau)|^{-2n} \sum_{(\vec{l}_L,\vec{l}_R)\in\Lambda(r,G,B)} q^{\vec{l}_L^2/2} \bar{q}^{\vec{l}_R^2/2} \\
	&= |\eta(\tau)|^{-2n} \sum_{(\alpha,\beta)\in\Lambda_N(r,G,B)} q^{(\alpha+\beta)^2/4} \bar{q}^{(\alpha-\beta)^2/4}. \label{eq:partition_function}
\end{align}
The partition function of the Narain CFT is modular-invariant, i.e. $Z(\tau)=Z(\tau+1)$ and $Z(\tau)=Z(-1/\tau)$ which respectively correspond to evenness and self-duality of the lattice ($\Lambda(r,G,B)$ with the metric $g'$ or $\Lambda_N(r,G,B)$ with the metric $g$).

\subsection{Codes}
For a prime $p$, we will consider an error-correcting code on $F_p=\mathbb{Z}/p\mathbb{Z}$, which is a finite field with $p$ elements. In this paper, elements on $F_p$ are often denoted by bars, for instance, $F_3=\{\bar{0},\bar{1},\bar{2}\}$, in order to distinguish them from elements on $\mathbb{R}$. Furthermore, $R:F_p\to \mathbb{Z}$ is a map s.t. $R(\bar{x})=x$. We define a distance between $a,b\in F_p^n$ by
\begin{equation}
	d(a,b) = \sqrt{ \sum_{i=1}^n (\min\{R(a_i-b_i), R(b_i-a_i)\})^2 }
\end{equation}
and write $d(a,0)=d(a)$. It can be interpreted as an analog of the Euclidean distance on a torus $T^n$ with length $p$.

A code on $F_p$ can be expressed by a subspace $\mathcal{C}\subset F_p^n$ where $n$ is called length. In particular, a code $\mathcal{C}$ is additive if $a+b\in\mathcal{C}$ for all $a,b\in\mathcal{C}$. For an additive code $\mathcal{C}\subset F_p^n$, there exists an integer $k$ s.t. $\mathcal{C}$ has $p^k$ elements, and an $n\times k$ matrix $G$ is called a generator matrix if
\begin{equation}
	\mathcal{C} = \{ Ga \mid a\in F_p^k \}.
\end{equation}

The code is used to transmit information. During the transmission, it may be possible to pick errors. Now, we assume that an error $e\in F_p^n: a\to a+e$ occurs with a probability inversely proportional to $d(e)$. In this case, the coding procedure through $\mathcal{C}$ with a specified $n\times k$ generator matrix $G$ on $F_p$ is as follows:
\begin{enumerate}
	\item Alice wants to transmit $a\in F_p^k$ to Bob.
	\item Alice sends $Ga\in \mathcal{C}\subset F_p^n$ to Bob.
	\item Bob receives $b=Ga+e\in F_p^n$. (If no error occurred, $b=Ga$.)
	\item Bob decodes $b$ by $\mathrm{arg}\min_{c \in\mathcal{C}} d(b,c)$.
\end{enumerate}
Bob can get the correct message $a$ if $d(b,Ga)<d(b,c)$ for all $c\neq Ga\in\mathcal{C}$. Thus, we can always correct the error $e$ if $d(e)<D(\mathcal{C})/2$ where $D(\mathcal{C})$ is a distance of the code defined by
\begin{equation}
	D(\mathcal{C}) = \min_{a,b\in\mathcal{C},a\neq b} d(a,b).
\end{equation}
The distance does not depend on the choice of the generator matrix and represents the error correction capability of the code. If $\mathcal{C}$ is additive,
\begin{equation}
	D(\mathcal{C}) = \min_{c\in\mathcal{C},c\neq0} d(c)
\end{equation}
since $d(a,b)=d(a-b,0)$ and $a-b\in\mathcal{C}$. Note that $D(\mathcal{C})$ is not the Hamming distance.

As for lattices, we define the property ``self-dual'' for codes.
\begin{itemize}
	\item ``self-dual'': For a code $\mathcal{C}\subset F_p^n$, a dual code with a metric $\bar{g}$ on $F_p^n$ is
	\begin{equation}
		\mathcal{C}^\ast = \{ c' \in F_p^n \mid \forall c\in\mathcal{C}, c\circ c'= \bar{0} \} \subset F_p^n.
	\end{equation}
	$\mathcal{C}$ is self-dual if $\mathcal{C} = \mathcal{C}^\ast$.
\end{itemize}
Here, the circle ``$\circ$'' means an inner product $:F_p^n\times F_p^n\to F_p$ defined by $a\circ b=\sum_{i,j=1}^n \bar{g}_{ij}a_ib_j$.

\section{Narain CFTs and codes} \label{sec:relation}
This section is the main part of this paper. We will define the correspondence between codes on $F_p^{2n}$ and Narain CFTs with $n$ scalars and see the relation between quantities in both theories.

In this section, we set
\begin{equation}
	g = \begin{pmatrix} 0&I \\ I&0 \end{pmatrix},\quad
	\bar{g} = \begin{pmatrix} \bar{0}&\bar{I} \\ \bar{I}&\bar{0} \end{pmatrix}
\end{equation}
where $I,\bar{I}$ are $n\times n$ identity matrices on $\mathbb{R},F_p$. Furthermore, ``$\cdot$'' and ``$\circ$'' mean inner products respectively on $\mathbb{R}^{2n}$ with the metric $g$ and on $F_p^{2n}$ with the metric $\bar{g}$.

\subsection{Construction of Narain CFTs from codes} \label{sec:construction_flat}
We define a lattice corresponding to an additive code $\mathcal{C}\subset F_p^{2n}$ by
\begin{equation} \label{eq:lambda_p_C}
	\Lambda_p(\mathcal{C}) = \left\{ \frac{R(c)+pm}{\sqrt{p}} \,\middle|\, c\in\mathcal{C}, m\in \mathbb{Z}^{2n} \right\} \subset \frac{1}{\sqrt{p}} \mathbb{Z}^{2n} \subset \mathbb{R}^{2n}.
\end{equation}
It is obvious from the definition that $\Lambda_p(\mathcal{C})=\Lambda_p(\mathcal{C}')$ if and only if $\mathcal{C}=\mathcal{C}'$. In this construction, we can translate the properties ``even'' and ``self-dual'' into words in codes as follows.
\begin{prop} \label{prop:lambda_p_even}
	For a prime $p\neq2$, $\Lambda_p(\mathcal{C})$ is even with the metric $g$ if and only if $c\circ c=\bar{0}$ for all $c\in\mathcal{C}$.
\end{prop}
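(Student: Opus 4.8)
The plan is to compute the squared norm $x\cdot x$ of a generic lattice vector directly and to show that, modulo $2\mathbb{Z}$, it is controlled entirely by the quadratic form $c\circ c$ attached to the underlying codeword. The factor of $2$ produced by the anti-diagonal metric is exactly the spot where the hypothesis $p\neq 2$ will enter, so I would keep careful track of it throughout.

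First I would parametrize an arbitrary element of $\Lambda_p(\mathcal{C})$ as $x=(R(c)+pm)/\sqrt{p}$ with $c\in\mathcal{C}$ and $m\in\mathbb{Z}^{2n}$, and set $v:=R(c)+pm\in\mathbb{Z}^{2n}$. Splitting $v=(v^{(1)},v^{(2)})$ into its two length-$n$ blocks, the off-diagonal form of $g$ gives $v^{T}gv=2\sum_{i=1}^{n}v_iv_{i+n}$, so that
\[
	x\cdot x=\frac{2}{p}\sum_{i=1}^{n}v_iv_{i+n}.
\]
Hence evenness, $x\cdot x\in 2\mathbb{Z}$, is equivalent to the congruence $\sum_{i=1}^{n}v_iv_{i+n}\equiv 0\pmod{p}$.

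Next I would observe that this congruence is insensitive to the integer shift $pm$: since $v_i\equiv R(c_i)$ and $v_{i+n}\equiv R(c_{i+n})\pmod{p}$, each product satisfies $v_iv_{i+n}\equiv R(c_i)R(c_{i+n})\pmod{p}$, so the $m$-dependence drops out and $\sum_i v_iv_{i+n}\equiv\sum_i R(c_i)R(c_{i+n})\pmod{p}$. Therefore whether a given $x\cdot x$ is even depends only on $c$, not on $m$, and evenness of the whole lattice reduces to the single condition $\sum_{i=1}^{n}R(c_i)R(c_{i+n})\equiv 0\pmod{p}$ holding for every $c\in\mathcal{C}$.

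Finally I would connect this to the code inner product. By definition $c\circ c=\sum_{i,j}\bar{g}_{ij}c_ic_j=\bar{2}\sum_{i=1}^{n}c_ic_{i+n}$ in $F_p$, so $c\circ c=\bar{0}$ is precisely the statement $2\sum_i R(c_i)R(c_{i+n})\equiv 0\pmod{p}$. The two congruences $\sum_i R(c_i)R(c_{i+n})\equiv 0$ and $2\sum_i R(c_i)R(c_{i+n})\equiv 0$ coincide exactly because $2$ is invertible in $F_p$ when $p\neq 2$; this is the only place the hypothesis is used, and indeed at $p=2$ the condition $c\circ c=\bar{0}$ is vacuous while the evenness condition need not be. Assembling the equivalences in both directions then finishes the argument (for $\Leftarrow$ the congruence yields $x\cdot x\in 2\mathbb{Z}$ for every $x$, and for $\Rightarrow$ it suffices to test the vectors with $m=0$). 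I do not expect a genuine obstacle here: the proof is a short direct computation, and the only care required is bookkeeping the factor of $2$ from the anti-diagonal metric and making explicit where $p\neq 2$ is invoked.
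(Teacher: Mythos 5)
Your proof is correct and follows essentially the same route as the paper's: expand the squared norm of $(R(c)+pm)/\sqrt{p}$ with the anti-diagonal metric, observe that the $m$-dependent contributions are irrelevant (the paper notes the cross and quadratic terms in $m$ are even, you work mod $p$ on the products directly — same computation), reduce evenness to $\sum_i R(c_i)R(c_{i+n})\equiv 0\pmod p$, and invoke invertibility of $2$ in $F_p$ for $p\neq 2$ to identify this with $c\circ c=\bar{0}$. No gaps; your explicit flagging of where $p\neq 2$ enters matches the paper's remark that the implication fails for $p=2$.
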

\begin{proof}
	We rewrite the evenness of the lattice. From the definition, $\Lambda_p(\mathcal{C})$ is even with $g$ when $x\cdot x\in 2\mathbb{Z}$ for all $x\in\Lambda_p(\mathcal{C})$. From the construction \eqref{eq:lambda_p_C}, it can be rewritten that
	\begin{equation}
	\begin{aligned}
		& \frac{1}{\sqrt{p}} (R(c)+pm)\cdot \frac{1}{\sqrt{p}} (R(c)+pm) \\
		=\;& \frac{1}{p} R(c)\cdot R(c) + 2R(c)\cdot m + p m\cdot m \\
		=\;& \frac{2}{p} \sum_{i=1}^n R(c_i)R(c_{i+n}) + 2\sum_{i=1}^n (R(c_i)m_{i+n}+R(c_{i+n})m_i) + 2p \sum_{i=1}^n m_im_{i+n}
	\end{aligned}
	\end{equation}
	is even for all $c\in\mathcal{C}$ and $m\in\mathbb{Z}^{2n}$. Since the second and third terms are always even, it is satisfied if and only if $\sum_{i=1}^n R(c_i)R(c_{i+n})\in p\mathbb{Z}$. The calculation on $F_p$ can be regarded as that on $\mathbb{R}$ with mod $p$, thus it can be expressed as $\sum_{i=1}^nc_ic_{i+n}=\bar{0}$. From $c\circ c=2\sum_{i=1}^n c_ic_{i+n}$, it is equivalent to $c\circ c=\bar{0}$. Note that ``$c\circ c=\bar{0} \Rightarrow \sum c_ic_{i+n}=\bar{0}$'' does not hold for $p=2$. Thus, for $p\neq2$, we can conclude that $\Lambda_p(\mathcal{C})$ is even with $g$ if and only if $c\circ c=\bar{0}$ for all $c\in\mathcal{C}$.
\end{proof}
\begin{prop} \label{prop:lambda_p_self-dual}
	For a prime $p$ (including $p=2$), $\Lambda_p(\mathcal{C})$ is self-dual with the metric $g$ if and only if $\mathcal{C}$ is self-dual with the metric $\bar{g}$.
\end{prop}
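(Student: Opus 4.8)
The plan is to show directly that the dual of a code lattice is again a code lattice, namely $\Lambda_p(\mathcal{C})^\ast = \Lambda_p(\mathcal{C}^\ast)$, and then invoke the injectivity already noted below \eqref{eq:lambda_p_C} (that $\Lambda_p(\mathcal{C})=\Lambda_p(\mathcal{C}')$ iff $\mathcal{C}=\mathcal{C}'$). Self-duality of the lattice, i.e. $\Lambda_p(\mathcal{C}) = \Lambda_p(\mathcal{C}^\ast)$, is then equivalent to $\mathcal{C} = \mathcal{C}^\ast$, which is exactly self-duality of the code.

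First I would record the two-sided inclusion $\sqrt{p}\,\mathbb{Z}^{2n} \subseteq \Lambda_p(\mathcal{C}) \subseteq \frac{1}{\sqrt{p}}\mathbb{Z}^{2n}$, which is immediate from \eqref{eq:lambda_p_C} (the lower inclusion uses $c=\bar{0}$ and arbitrary $m$). Since $g$ is an integral matrix with $\det g = \pm1$, one has $(\mathbb{Z}^{2n})^\ast = \mathbb{Z}^{2n}$, hence $(\frac{1}{\sqrt{p}}\mathbb{Z}^{2n})^\ast = \sqrt{p}\,\mathbb{Z}^{2n}$ and conversely. Dualizing the inclusion reverses it, giving $\sqrt{p}\,\mathbb{Z}^{2n} \subseteq \Lambda_p(\mathcal{C})^\ast \subseteq \frac{1}{\sqrt{p}}\mathbb{Z}^{2n}$. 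This confines every dual vector to the shape $x' = (R(c')+pm')/\sqrt{p}$ with $c'\in F_p^{2n}$, $m'\in\mathbb{Z}^{2n}$, i.e. exactly the form appearing in a code lattice.

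Next I would pair a generic $x=(R(c)+pm)/\sqrt{p}\in\Lambda_p(\mathcal{C})$ with such an $x'$. Expanding the metric as in the proof of Proposition \ref{prop:lambda_p_even},
\begin{equation}
	x\cdot x' = \frac{1}{p}\,R(c)\cdot R(c') + \big(R(c)\cdot m' + m\cdot R(c')\big) + p\,(m\cdot m'),
\end{equation}
where the last two terms are integers because all of $R(c),R(c'),m,m'$ are integral and $g$ is integral. Thus $x\cdot x'\in\mathbb{Z}$ for every $x$ precisely when $R(c)\cdot R(c')\equiv0 \pmod p$ for all $c\in\mathcal{C}$. Using $R(c)\cdot R(c') = \sum_{i=1}^n\big(R(c_i)R(c'_{i+n})+R(c_{i+n})R(c'_i)\big)$, which reduces mod $p$ to $c\circ c'$, this is exactly $c\circ c'=\bar{0}$ for all $c\in\mathcal{C}$, i.e. $c'\in\mathcal{C}^\ast$. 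Since the $m'$-translations are unconstrained (the free $\sqrt{p}\,\mathbb{Z}^{2n}$ piece), this identifies $\Lambda_p(\mathcal{C})^\ast$ with $\Lambda_p(\mathcal{C}^\ast)$.

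The step needing the most care — and the only place where one might fear the $p=2$ versus $p\neq2$ dichotomy of Proposition \ref{prop:lambda_p_even} to reappear — is the claim that the dual lattice contains no vectors outside the code form. This is precisely what the sandwich argument of the first step guarantees, and it uses only unimodularity of $g$, so it holds uniformly for every prime $p$. The factor-of-two subtlety that broke evenness at $p=2$ never enters here, because self-duality is a condition on the pairing $c\circ c'$ of two (possibly distinct) codewords rather than on the single diagonal pairing $c\circ c$.
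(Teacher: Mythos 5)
Your proof is correct and follows essentially the same route as the paper: both hinge on establishing the key identity $(\Lambda_p(\mathcal{C}))^\ast=\Lambda_p(\mathcal{C}^\ast)$ via the same expansion of $x\cdot x'$ and then conclude using the injectivity of $\mathcal{C}\mapsto\Lambda_p(\mathcal{C})$. The only (cosmetic) difference is that you confine dual vectors to the form $(R(c')+pm')/\sqrt{p}$ by dualizing the sandwich $\sqrt{p}\,\mathbb{Z}^{2n}\subseteq\Lambda_p(\mathcal{C})\subseteq\frac{1}{\sqrt{p}}\mathbb{Z}^{2n}$ using unimodularity of $g$, whereas the paper tests $x'$ directly against the vectors $\sqrt{p}\,e_i$; these are the same computation in different clothing, and your closing remark about why the $p=2$ dichotomy does not reappear is accurate.
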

\begin{proof}
	First, we show that the dual lattice of the code is the lattice of the dual code, i.e. $(\Lambda_p(\mathcal{C}))^\ast=\Lambda_p(\mathcal{C}^\ast)$. We prove $(\Lambda_p(\mathcal{C}))^\ast \subset \Lambda_p(\mathcal{C}^\ast)$ explicitly. From the definition of the dual lattice, $x'\in (\Lambda_p(\mathcal{C}))^\ast$ when $x\cdot x'\in\mathbb{Z}$ for all $x\in\Lambda_p(\mathcal{C})$. From the construction \eqref{eq:lambda_p_C}, it can be rewritten that $\frac{1}{\sqrt{p}}(R(c)+pm)\cdot x' \in\mathbb{Z}\cdots(\star)$ for all $c\in\mathcal{C}$ and $m\in\mathbb{Z}^{2n}$. By considering the case $c=\vec{\bar{0}}$, only one element of $m$ is 1 and the others are 0, all elements of $x'$ should be multiples of $1/\sqrt{p}$. Since any integers can be expressed as $R(c')+pm'$, we can write $x'=\frac{1}{\sqrt{p}}(R(c')+pm')$ by $c'\in F_p^{2n}$ and $m'\in\mathbb{Z}^{2n}$, and $(\star)$ becomes $\frac{1}{p} (R(c)+pm)\cdot(R(c')+pm') \in\mathbb{Z}$. This is satisfied if and only if $R(c)\cdot R(c')\in p\mathbb{Z}$, which is equivalent to $c\circ c'=\bar{0}$. From the definition of the dual code, it means $c'\in\mathcal{C}^\ast$ and thus $x'\in\Lambda_p(\mathcal{C}^\ast)$. Since the discussion can be easily traced back to prove $(\Lambda_p(\mathcal{C}))^\ast \supset \Lambda_p(\mathcal{C}^\ast)$, we can conclude that $(\Lambda_p(\mathcal{C}))^\ast = \Lambda_p(\mathcal{C}^\ast)$.	Therefore, $\Lambda_p(\mathcal{C})$ is self-dual if and only if $\Lambda_p(\mathcal{C}) = \Lambda_p(\mathcal{C}^\ast)$, i.e. $\mathcal{C}$ is self-dual.
\end{proof}
\begin{prop} \label{prop:lambda_p_even_self-dual}
	For a prime $p\neq2$, $\Lambda_p(\mathcal{C})$ is even self-dual with the metric $g$ if and only if $\mathcal{C}$ is self-dual with the metric $\bar{g}$.
\end{prop}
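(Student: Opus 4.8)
The plan is to obtain this as an essentially immediate corollary of the two preceding propositions, with one small observation bridging them. I would split the biconditional into its two implications and treat each in turn.

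For the implication that even self-duality of $\Lambda_p(\mathcal{C})$ forces self-duality of $\mathcal{C}$, I would simply note that an even self-dual lattice is in particular self-dual, so Proposition \ref{prop:lambda_p_self-dual} applies directly and yields that $\mathcal{C}$ is self-dual. No further work is required here.

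For the converse, I would assume $\mathcal{C}$ is self-dual and show that $\Lambda_p(\mathcal{C})$ is even self-dual. Self-duality of $\Lambda_p(\mathcal{C})$ follows at once from Proposition \ref{prop:lambda_p_self-dual}, so the task reduces to establishing evenness. By Proposition \ref{prop:lambda_p_even}, where the hypothesis $p\neq2$ enters, evenness holds precisely when $c\circ c=\bar{0}$ for every $c\in\mathcal{C}$. The key step is to derive this self-orthogonality from self-duality alone: since $\mathcal{C}=\mathcal{C}^\ast$ gives $\mathcal{C}\subseteq\mathcal{C}^\ast$, the definition of the dual code says that each $c\in\mathcal{C}$ pairs trivially with every element of $\mathcal{C}$, and specializing the second argument to $c$ itself yields $c\circ c=\bar{0}$. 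This is exactly the hypothesis of Proposition \ref{prop:lambda_p_even}, so $\Lambda_p(\mathcal{C})$ is even, and combined with self-duality it is even self-dual.

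I do not expect any genuine obstacle. The only conceptual content is recognizing that evenness of the lattice translates into the self-orthogonality of each codeword, and that this self-orthogonality is already guaranteed by self-duality; the restriction $p\neq2$ is inherited solely from Proposition \ref{prop:lambda_p_even} and plays no independent role in the argument.
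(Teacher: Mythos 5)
Your proposal is correct and follows exactly the paper's route: the paper likewise deduces the result from Propositions \ref{prop:lambda_p_even} and \ref{prop:lambda_p_self-dual} together with the observation that $\mathcal{C}=\mathcal{C}^\ast$ forces $c\circ c=\bar{0}$ for all $c\in\mathcal{C}$. You have merely written out the two directions that the paper declares ``obvious,'' so there is nothing to add.
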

\begin{proof}
	It is obvious from Propositions \ref{prop:lambda_p_even}, \ref{prop:lambda_p_self-dual} and the fact that $\mathcal{C}=\mathcal{C}^\ast$ leads $c\circ c=\bar{0}$ for all $c\in\mathcal{C}$.
\end{proof}
Since we have to handle the case $p=2$ differently as shown in the proof, we will only consider a prime $p\neq2$.

We have seen that we can generate even self-dual lattices from self-dual codes. Conversely, if a lattice $\Lambda\subset \mathbb{R}^{2n}$ is even self-dual with the metric $g$ and satisfies $\sqrt{p}\mathbb{Z}^{2n}\subset\Lambda$, there exists a code $\mathcal{C}\subset F_p^{2n}$ that satisfies $\Lambda=\Lambda_p(\mathcal{C})$. This is because $\Lambda\subset\frac{1}{\sqrt{p}}\mathbb{Z}^{2n}$ from the self-duality and $\Lambda$ can be written as $\Lambda=\{\frac{R(c)+pm}{\sqrt{p}}\mid c\in\mathcal{C},m\in\mathbb{Z}^{2n}\}$ from linearity of the lattice. Thus, by combining with Proposition \ref{prop:lambda_p_even_self-dual}, we can conclude that even self-dual lattices $\Lambda\subset\mathbb{R}^{2n}$ with the metric $g$ that satisfy $\sqrt{p}\mathbb{Z}^{2n}\subset\Lambda$ and self-dual codes $\mathcal{C}\subset F_p^{2n}$ with the metric $\bar{g}$ have a one-to-one correspondence.

Next, we want to find a generator matrix $G$ of a self-dual code $\mathcal{C}\subset F_p^{2n}$. From linearity, when $\mathcal{C}$ is $k$ dimensional, i.e. $|\mathcal{C}|=p^k$, $\mathcal{C}^\ast$ is $2n-k$ dimensional. Thus, $\mathcal{C}=\mathcal{C}^\ast$ requires $\mathcal{C}$ to be $n$ dimensional and $G$ to be a $2n\times n$ matrix. By the method using row reduction in Appendix \ref{app:generator_matrix}, we can find $G$ that can be written as
\begin{equation}
	G = \bar{S} \begin{pmatrix} \bar{I} \\ \bar{B} \end{pmatrix}
\end{equation}
where $\bar{B}$ is an $n \times n$ matrix on $F_p$ and $\bar{S}$ is a $2n\times 2n$ matrix that has one $\bar{1}$ and $2n-1$ $\bar{0}$'s in each row and satisfies $\bar{S}^T\bar{g}\bar{S}=\bar{g}$. $\bar{S}$ can be regarded as the operation to swap rows while keeping the inner product.

We want to find a condition for $\bar{B}$. From the definition of the dual code, $\mathcal{C}$ is self-dual only if $c\circ c'=\bar{0}$ for all $c,c'\in\mathcal{C}$. By the generator matrix, it can be rewritten as $a'^TG^T\bar{g}Ga=\bar{0}$ for all $a,a'\in F_p^{2n}$. This is equivalent to $G^T\bar{g}G=\bar{O}$ where $\bar{O}$ is a zero matrix, thus
\begin{equation}
	\bar{O} =
	\begin{pmatrix} \bar{I}&\bar{B}^T \end{pmatrix} \bar{S}^T \bar{g} \bar{S} \begin{pmatrix} \bar{I} \\ \bar{B} \end{pmatrix}
	= \begin{pmatrix} \bar{I}&\bar{B}^T \end{pmatrix} \bar{g} \begin{pmatrix} \bar{I} \\ \bar{B} \end{pmatrix} = \bar{B}+\bar{B}^T,
\end{equation}
i.e. $\bar{B}$ is antisymmetric in the sense of $F_p$.

Using these results, we can relate Narain CFTs and self-dual codes through even self-dual lattices as follows. This is one of the main purposes of this paper.
\begin{prop} \label{prop:code_CFT_rel}
	A self-dual code $\mathcal{C}\subset F_p^{2n}$ with the metric $\bar{g}$ has a generator matrix
	\begin{equation} \label{eq:self-dual_code_gen_mat}
		G = \bar{S} \begin{pmatrix} \bar{B} \\ \bar{I} \end{pmatrix}
	\end{equation}
	where $\bar{B}$ is an $n\times n$ matrix s.t. $\bar{B}+\bar{B}^T=\bar{O}$ and $\bar{S}$ is a $2n\times 2n$ matrix that has one $\bar{1}$ and $2n-1$ $\bar{0}$'s in each row and satisfies $\bar{S}^T\bar{g}\bar{S}=\bar{g}$. Then,
	\begin{equation} \label{eq:lat_rel_flat}
		\Lambda_p(\mathcal{C}) = S \Lambda_N(\sqrt{2/p},I,B')
	\end{equation}
	where $S=R(\bar{S})$ and $B'$ is an antisymmetric matrix s.t. $B'_{ij}=-B'_{ji}=R(\bar{B}_{ij})$ for $i\geq j$.
\end{prop}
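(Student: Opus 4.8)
The plan is to prove the statement in two stages: first verify that a self-dual code admits a generator matrix of the stated form, and then establish the lattice identity \eqref{eq:lat_rel_flat} by computing both sides directly. Throughout I may assume the earlier fact (obtained by row reduction just before the proposition) that $\mathcal{C}$ has a generator matrix $G_0=\bar{S}_0\begin{pmatrix}\bar{I}\\\bar{B}\end{pmatrix}$ with $\bar{B}$ antisymmetric and $\bar{S}_0^T\bar{g}\bar{S}_0=\bar{g}$.

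For the generator matrix, I would observe that the block swap $\bar{g}=\begin{pmatrix}\bar{0}&\bar{I}\\\bar{I}&\bar{0}\end{pmatrix}$ is itself a swap matrix satisfying $\bar{g}^T\bar{g}\bar{g}=\bar{g}$ (since $\bar{g}^T=\bar{g}$ and $\bar{g}^2=\bar{I}$), and that $\bar{g}\begin{pmatrix}\bar{B}\\\bar{I}\end{pmatrix}=\begin{pmatrix}\bar{I}\\\bar{B}\end{pmatrix}$. Hence $G_0=(\bar{S}_0\bar{g})\begin{pmatrix}\bar{B}\\\bar{I}\end{pmatrix}$, so setting $\bar{S}:=\bar{S}_0\bar{g}$ gives a generator matrix of exactly the shape \eqref{eq:self-dual_code_gen_mat} with the same antisymmetric $\bar{B}$. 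One checks $\bar{S}$ is again a swap matrix (a product of swap matrices) and that $\bar{S}^T\bar{g}\bar{S}=\bar{g}^T\bar{S}_0^T\bar{g}\bar{S}_0\bar{g}=\bar{g}^T\bar{g}\bar{g}=\bar{g}$.

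For the lattice identity, the key first move is to factor out $S=R(\bar{S})$. Since $\bar{S}$ is a permutation matrix, $S$ is an integer permutation matrix; it commutes with $R$ (it only reshuffles entries, never forcing a reduction mod $p$) and satisfies $S\mathbb{Z}^{2n}=\mathbb{Z}^{2n}$. Writing a codeword as $c=\bar{S}c_0$ with $c_0=\begin{pmatrix}\bar{B}a\\a\end{pmatrix}$, one gets $R(c)+p\mu=S\bigl(R(c_0)+p\mu'\bigr)$ with $\mu'=S^{-1}\mu$ ranging over $\mathbb{Z}^{2n}$, so $\Lambda_p(\mathcal{C})=S\,\Lambda_p(\mathcal{C}_0)$, where $\mathcal{C}_0$ is generated by $\begin{pmatrix}\bar{B}\\\bar{I}\end{pmatrix}$. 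It then remains to show $\Lambda_p(\mathcal{C}_0)=\Lambda_N(\sqrt{2/p},I,B')$. A general vector of $\Lambda_p(\mathcal{C}_0)$ has top block $\frac{1}{\sqrt{p}}(R(\bar{B}a)+p\mu_1)$ and bottom block $\frac{1}{\sqrt{p}}(R(a)+p\mu_2)$. The crucial arithmetic point is that $B'\equiv\bar{B}\pmod{p}$: for $i\geq j$ this is immediate, and for $i<j$ we have $B'_{ij}=-R(\bar{B}_{ji})\equiv R(\bar{B}_{ij})\pmod{p}$ by antisymmetry of $\bar{B}$, so $B'$ is a genuine integer antisymmetric lift of $\bar{B}$. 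Consequently $R(\bar{B}a)\equiv B'R(a)\pmod{p}$, and the discrepancy $R(\bar{B}a)-B'R(a)$ is absorbed into the free vector $\mu_1$.

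Finally I would carry out the change of variables: setting $w:=R(a)+p\mu_2$ lets $w$ range over all of $\mathbb{Z}^n$ (with $a$ recovered as $w\bmod p$), and rewriting the top block as $\sqrt{p}\,m+\frac{1}{\sqrt{p}}B'w$ shows $m$ ranges freely over $\mathbb{Z}^n$; this identifies $\Lambda_p(\mathcal{C}_0)$ with the set of all $(\sqrt{p}\,m+\frac{1}{\sqrt{p}}B'w,\ \frac{1}{\sqrt{p}}w)$. Substituting $r=\sqrt{2/p}$ and $\gamma=I$ into \eqref{eq:lambda_N_gen_mat} yields the generator matrix $\begin{pmatrix}\sqrt{p}\,I&\frac{1}{\sqrt{p}}B'\\0&\frac{1}{\sqrt{p}}I\end{pmatrix}$, whose lattice is exactly this set, completing the proof. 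I expect the main obstacle to be the simultaneous bookkeeping of the two mod-$p$ reductions — in particular checking that, after absorbing $R(\bar{B}a)-B'R(a)$ into $\mu_1$, the remaining integer parameter $m$ still ranges over all of $\mathbb{Z}^n$ independently of $w$, which relies precisely on $B'$ having integer entries.
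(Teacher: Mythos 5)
Your proof is correct and follows essentially the same route as the paper: both arguments reduce to checking that $\Lambda_p(\mathcal{C})$ and $S\Lambda_N(\sqrt{2/p},I,B')$ are generated by the same matrix $S\begin{pmatrix}\sqrt{p}I&\frac{1}{\sqrt{p}}B'\\0&\frac{1}{\sqrt{p}}I\end{pmatrix}$, with the change of variables $w=R(a)+p\mu_2$ and the congruence $B'\equiv\bar{B}\pmod p$ doing the work that the paper compresses into its ``appropriate vectors'' $\Box$. Your opening observation that $\bar g$ itself is an admissible swap converting the appendix's $\bigl(\bar I;\bar B\bigr)$ form into the proposition's $\bigl(\bar B;\bar I\bigr)$ form is a detail the paper leaves implicit, and is a worthwhile addition.
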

\begin{proof}
	One generator matrix of $S\Lambda_N$ is
	\begin{equation}
		G_\Lambda = S \begin{pmatrix}
			\sqrt{p}I & \frac{1}{\sqrt{p}} B' \\ O & \frac{1}{\sqrt{p}}I
		\end{pmatrix}
	\end{equation}
	from \eqref{eq:lambda_N_gen_mat} and this is also the generator matrix of $\Lambda_p(\mathcal{C})$ since
	\begin{gather}
		\frac{1}{\sqrt{p}}R(c=Ga) = G_\Lambda \begin{pmatrix} \Box_{(n)} \\ R(a) \end{pmatrix}, \\
		S \begin{pmatrix} \vec{0}_{(i-1)} \\ \sqrt{p} \\ \vec{0}_{(n-i)} \\ \\ \vec{0}_{(n)} \\ \\ \end{pmatrix} = G_\Lambda \begin{pmatrix} \vec{0}_{(i-1)} \\ 1 \\ \vec{0}_{(n-i)} \\ \\ \vec{0}_{(n)} \\ \\ \end{pmatrix},\quad
		S \begin{pmatrix} \\ \vec{0}_{(n)} \\ \\ \vec{0}_{(i-1)} \\ \sqrt{p} \\ \vec{0}_{(n-i)} \end{pmatrix} = G_\Lambda \begin{pmatrix} \\ \Box_{(n)} \\ \\ \vec{0}_{(i-1)} \\ p \\ \vec{0}_{(n-i)} \end{pmatrix}
	\end{gather}
	where $\Box$'s mean appropriate vectors and we can generate any vector in $\Lambda_p(\mathcal{C})$ by combining these.
\end{proof}

Note that $S$ in \eqref{eq:lat_rel_flat} does not affect even self-duality of the lattice since $S^TgS=g$. By this proposition, several Narain CFTs can be constructed from a self-dual code depending on the choice of the generator matrix of the code. Since they generate the same lattice up to $S$ and have the same quantities as we will see, we simply call them the corresponding CFTs.

From \eqref{eq:lat_rel_flat}, the corresponding CFTs have dimensionless radii $r=\sqrt{2/p}$, the flat metric and integer backgrounds. Conversely, any Narain CFT with such properties can be constructed from a self-dual code up to $B_{ij}\sim B_{ij} +p$ in the background, which corresponds to T-duality.

\subsection{The spectral gap and the partition function} \label{sec:quantities}
Through Proposition \ref{prop:code_CFT_rel}, we can describe the important quantities in the corresponding Narain CFTs in the language of the code. From \eqref{eq:spectral_gap}, the spectral gap $\Delta$ is
\begin{equation}
\begin{aligned}
	\Delta &= \min_{\substack{x\in\Lambda_N(r,I,B) \\ x\neq0}} \frac{1}{2} x^Tx
	= \min_{\substack{y\in\Lambda_p(\mathcal{C}) \\ y\neq0}} \frac{1}{2} y^TSS^Ty 
	= \min_{\substack{y\in\Lambda_p(\mathcal{C}) \\ y\neq0}} \frac{1}{2} y^Ty \\
	&= \min_{\substack{c\in\mathcal{C},m\in\mathbb{Z}^{2n} \\ R(c)+pm\neq0}} \frac{1}{2} \sum_{i=1}^{2n} \left( \frac{R(c_i)+pm_i}{\sqrt{p}} \right)^2 \\
	&= \frac{1}{2p} \min \left\{ \min_{\substack{c\in\mathcal{C}, m\in\mathbb{Z}^{2n} \\ c\neq0}} \sum_{i=1}^{2n} (R(c_i)+pm_i)^2, \min_{\substack{m\in\mathbb{Z}^{2n} \\ m\neq0}} \sum_{i=1}^{2n} (pm_i)^2 \right\} \\
	&= \frac{1}{2p} \min \left\{ \min_{\substack{c\in\mathcal{C} \\ c\neq0}} \sum_{i=1}^{2n} \min\{R(c_i)^2, (R(c_i)-p)^2\}, p^2 \right\} \\
	&= \frac{1}{2p} \min \left\{ D(\mathcal{C})^2, p^2 \right\}. \label{eq:gap_from_code}
\end{aligned}
\end{equation}
Thus, under the condition $D(\mathcal{C})\leq p$, the spectral gap $\Delta$ represents the error correction capability of the code. Through this relation, for fixed $p,n$, searching for the code with the highest correction capability among self-dual codes on $F_p^{2n}$ is equivalent to searching for the theory with the largest spectral gap among Narain CFTs with $n$ scalars, dimensionless radii $r=\sqrt{2/p}$, the flat metric and integer backgrounds.

Figure \ref{fig:gap} shows the largest spectral gap of Narain CFTs constructed from self-dual codes on $F_p^{2n}$. We computed all the values by a full search using \eqref{eq:gap_from_code} for codes with generator matrices \eqref{eq:self-dual_code_gen_mat}. In (a), we varied $n$ at $p=3,5,7,11$ and calculated values evaluate `optimal' values from below, which are the largest spectral gaps among all Narain CFTs with $n$ scalars cited from Table 1 in \cite{2006.04839}. In (b),(c), we varied $p$ at $n=2,3$. The values at $n=2$ suggest that their upper limit is $1/\sqrt{3}$, which can be checked analytically by reducing to the sphere packing in two dimensions as follows.

For $n=2$, a generator matrix of a self-dual code $\mathcal{C}\subset F_p^{2n}$ is
\begin{equation}
	\bar{S} \begin{pmatrix} \bar{0}&\bar{a} \\ -\bar{a}&\bar{0} \\ \bar{1}&\bar{0} \\ \bar{0}&\bar{1} \end{pmatrix}.
\end{equation}
We can reduce it to a code $\mathcal{C}'\subset F_p^2$ generated by $(\bar{a},\bar{1})^T$ and
\begin{equation}
	\Delta = \min_{\substack{y\in\Lambda_p(\mathcal{C}) \\ y\neq0}} \frac{1}{2} y^Ty
	= \min_{\substack{y'\in\Lambda_p(\mathcal{C}') \\ y'\neq0}} \frac{1}{2} y'^Ty'.
\end{equation}
From the construction \eqref{eq:lambda_p_C}, $\Lambda_p(\mathcal{C}')\subset\mathbb{R}^2$ has $p$ points per $\sqrt{p}\times\sqrt{p}$. On the other hand, it is widely known that the highest density of the sphere packing in two dimensions is $\pi/(2\sqrt{3})$. Thus, the largest radius $r$ of circles centered at points in $\Lambda_p(\mathcal{C'})$ that do not intersect each other satisfies $\pi r^2 \leq \pi/(2\sqrt{3})$ and
\begin{equation}
	\Delta = \frac{1}{2} (2r)^2 \leq \frac{1}{\sqrt{3}}.
\end{equation}
For a general integer $n$, we do not know the analytical upper bound.

\begin{figure}[tbp]
	\begin{minipage}[b]{\textwidth}
		\centering
		\includegraphics[width=.45\textwidth]{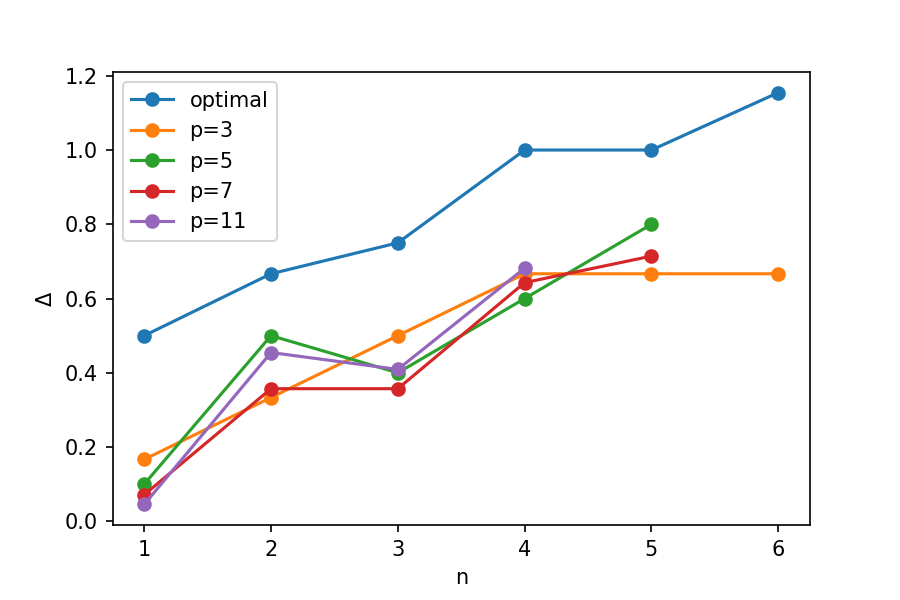}
		\subcaption{$p=3,5,7,11$}
	\end{minipage}
	\begin{minipage}[b]{0.45\textwidth}
		\centering
		\includegraphics[width=\textwidth]{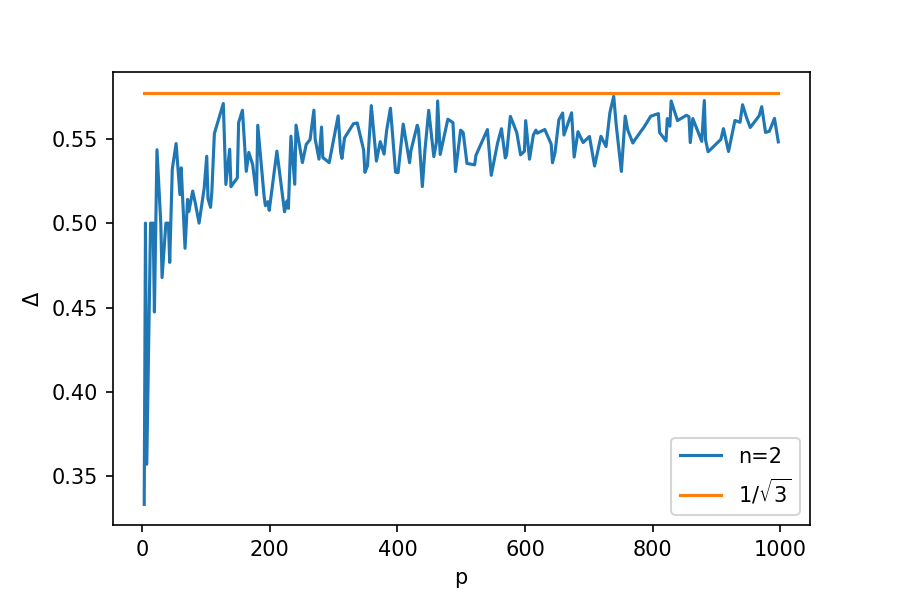}
		\subcaption{$n=2$}
	\end{minipage}
	\begin{minipage}[b]{0.45\textwidth}
		\centering
		\includegraphics[width=\textwidth]{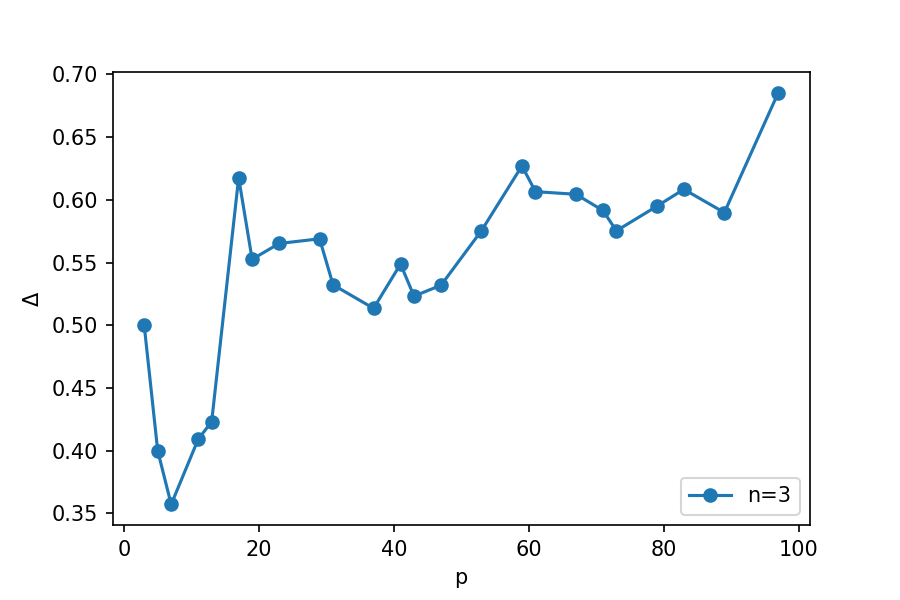}
		\subcaption{$n=3$}
	\end{minipage}
	\caption{The largest spectral gap of Narain CFTs constructed from self-dual codes on $F_p^{2n}$.}
	\label{fig:gap}
\end{figure}

Next, we consider the partition function. From \eqref{eq:partition_function},
\begin{equation}
\begin{aligned}
	&|\eta(\tau)|^{2n} Z(\tau) \\
	&= \sum_{x\in\Lambda_N(r,I,B)} q^{\sum_{i=1}^n (x_i+x_{i+n})^2/4} \bar{q}^{\sum_{i=1}^n (x_i-x_{i+n})^2/4} \\
	&= \sum_{y\in\Lambda_p(\mathcal{C})} q^{\sum_{i=1}^n (y_i+y_{i+n})^2/4} \bar{q}^{\sum_{i=1}^n (y_i-y_{i+n})^2/4} \\
	&= \sum_{c\in\mathcal{C}} \sum_{m\in\mathbb{Z}^{2n}} \prod_{i=1}^n q^{(R(c_i)+pm_i+R(c_{i+n})+pm_{i+n})^2/4p} \bar{q}^{(R(c_i)+pm_i-R(c_{i+n})-pm_{i+n})^2/4p} \\
	&= \sum_{c\in\mathcal{C}} \prod_{i=1}^n \sum_{m,l\in\mathbb{Z}} q^{(R(c_i)+R(c_{i+n})+p(m+l))^2/4p} \bar{q}^{(R(c_i)-R(c_{i+n})+p(m-l))^2/4p}.
\end{aligned}
\end{equation}
From the second line to the third line, we used the equation:
\begin{equation}
	\sum_{i=1}^n (x_i\pm x_{i+n})^2 = x^T(I\pm g)x = y^TS(I\pm g)S^Ty = y^T(I\pm g)y = \sum_{i=1}^n(y_i\pm y_{i+n})^2.
\end{equation}
Thus, the partition function $Z(\tau)$ is
\begin{equation}
	Z(\tau) = |\eta(\tau)|^{-2n} \sum_{c\in\mathcal{C}} \prod_{x,y\in F_p} (t_{x,y})^{w_{x,y}(c)}
\end{equation}
where
\begin{equation}
	t_{x,y} = \sum_{m,l\in\mathbb{Z}} q^{(R(x)+R(y)+p(m+l))^2/4p} \bar{q}^{(R(x)-R(y)+p(m-l))^2/4p}
\end{equation}
and $w_{x,y}(c)=|\{i\in\{1,\dots,n\}\mid(c_i,c_{i+n})=(x,y)\}|$. It has the form like the enumerator polynomial in coding theory. We can check modular invariance, i.e. $Z(\tau)=Z(\tau+1)$ and $Z(\tau)=Z(-1/\tau)$,  directly from this form.

\subsection{General construction} \label{sec:construction_general}
In section \ref{sec:construction_flat}, we constructed Narain CFTs with dimensionless radii $r=\sqrt{2/p}$, the flat metric and integer backgrounds from self-dual codes. By varying the correspondence between codes and lattices in \eqref{eq:lambda_p_C}, we can construct more general Narain CFTs.

We define a lattice corresponding to an additive code $\mathcal{C}\subset F_p^{2n}$ and a $2n\times 2n$ matrix $V=(v_1,\dots,v_{2n})\in GL(2n,\mathbb{R})$ by
\begin{equation}
	\Lambda_p(\mathcal{C};V) = \left\{ \sum_{i=1}^{2n} (R(c_i)+pm_i)v_i \,\middle|\, c\in\mathcal{C}, m\in\mathbb{Z}^{2n} \right\} \subset \mathbb{R}^{2n}.
\end{equation}

As in section \ref{sec:construction_flat}, we want to relate even self-duality in lattices and self-duality in codes. In fact, the following proposition holds.

\begin{prop} \label{prop:lambda_p_CV_even_self-dual}
	``For a primes $p\neq2$, $\Lambda_p(\mathcal{C};V)$ is even self-dual with the metric $g$ if and only if $\mathcal{C}$ is self-dual with the metric $\bar{g}$'' holds if conditions:
	\begin{itemize}
		\item $W:=pV^TgV$ is unimodular and $\forall i, W_{i,i}\in2\mathbb{Z}$
		\item $W \equiv R(\bar{g}) \mod p$
	\end{itemize}
	are satisfied.
\end{prop}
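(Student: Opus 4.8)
The plan is to mirror the structure of Propositions~\ref{prop:lambda_p_even}--\ref{prop:lambda_p_even_self-dual}: I will prove an evenness criterion and a self-duality criterion separately and then combine them. The computational backbone is the observation that every element of $\Lambda_p(\mathcal{C};V)$ has the form $x=Vn$ with $n=R(c)+pm\in\mathbb{Z}^{2n}$ (so that $n\bmod p\in\mathcal{C}$), and that for two such elements $x=Vn$, $x'=Vn'$ the inner product is
\begin{equation}
  x\cdot x'=(Vn)^T g (Vn')=n^T(V^TgV)n'=\frac{1}{p}\,n^T W n',
\end{equation}
since $W=pV^TgV$ (note $W$ is symmetric because $g$ is). Both hypotheses enter only through this single bilinear form: unimodularity makes $W$ and, crucially, $W^{-1}$ integer matrices, the even-diagonal condition controls parity, and $W\equiv R(\bar{g})\bmod p$ converts reductions of $n^TWn'$ into the code inner product $\circ$.

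For evenness I would expand $x\cdot x=\frac1p(R(c)+pm)^T W (R(c)+pm)$. The cross term is $2m^TWR(c)\in2\mathbb{Z}$ and the term $p\,m^TWm$ is even, because the even diagonal of $W$ forces $m^TWm$ to be even; hence evenness reduces to $\frac1p R(c)^TWR(c)\in2\mathbb{Z}$ for all $c\in\mathcal{C}$. Writing $Q(c)=R(c)^TWR(c)$, the even-diagonal hypothesis makes $Q(c)$ even for every $c$, while $W\equiv R(\bar{g})\bmod p$ gives $Q(c)\equiv R(c)^TR(\bar{g})R(c)\pmod p$, a lift of $c\circ c$. Since $2$ and $p$ are coprime, $Q(c)\in2p\mathbb{Z}$ is then equivalent to $Q(c)\in p\mathbb{Z}$, i.e.\ to $c\circ c=\bar{0}$. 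This yields the criterion: $\Lambda_p(\mathcal{C};V)$ is even if and only if $c\circ c=\bar{0}$ for all $c\in\mathcal{C}$.

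For self-duality the target is the identity $(\Lambda_p(\mathcal{C};V))^\ast=\Lambda_p(\mathcal{C}^\ast;V)$, exactly as in Proposition~\ref{prop:lambda_p_self-dual}. The inclusion $\supseteq$ is the easy direction: for $x=Vn$ with $n\bmod p\in\mathcal{C}$ and $x'=Vn'$ with $n'\bmod p\in\mathcal{C}^\ast$, reducing $n^TWn'$ modulo $p$ and using $W\equiv R(\bar{g})\bmod p$ turns it into a lift of $c\circ c'$, which vanishes because $c'\in\mathcal{C}^\ast$; hence $x\cdot x'=\frac1p n^TWn'\in\mathbb{Z}$. The reverse inclusion is where I expect the real work. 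Given $y\in(\Lambda_p(\mathcal{C};V))^\ast$, I write $y=Vn'$ and set $k:=Wn'=pV^Tgy$; testing $y$ against $p\mathbb{Z}^{2n}\subset\Lambda_p(\mathcal{C};V)$ forces $k\in\mathbb{Z}^{2n}$, and testing against the residue classes of $\mathcal{C}$ forces $\bar{g}\,\bar{k}\in\mathcal{C}^\ast$ (the standard $F_p$-dual and the $\bar{g}$-dual differ precisely by multiplication by $\bar{g}$, and $\bar{g}^2=\bar{I}$). Unimodularity of $W$ is then essential: it guarantees $n'=W^{-1}k\in\mathbb{Z}^{2n}$, and reducing $WW^{-1}=I$ modulo $p$ with $W\equiv R(\bar{g})\bmod p$ and $R(\bar{g})^2=I$ gives $W^{-1}\equiv R(\bar{g})\bmod p$, so $n'\bmod p=\bar{g}\,\bar{k}\in\mathcal{C}^\ast$. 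Thus $y\in\Lambda_p(\mathcal{C}^\ast;V)$.

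Finally I would assemble the two criteria. Since $V\in GL(2n,\mathbb{R})$ the map $\mathcal{C}\mapsto\Lambda_p(\mathcal{C};V)$ is injective, so self-duality of the lattice is equivalent to $\Lambda_p(\mathcal{C};V)=\Lambda_p(\mathcal{C}^\ast;V)$, i.e.\ to $\mathcal{C}=\mathcal{C}^\ast$. If the code is self-dual then $c\circ c=\bar{0}$ for every $c\in\mathcal{C}$, so the evenness criterion is automatically satisfied and the lattice is even self-dual; conversely an even self-dual lattice is in particular self-dual, forcing $\mathcal{C}=\mathcal{C}^\ast$. The main obstacle is the reverse inclusion in the self-duality step: in the flat case $W$ was literally the permutation-type matrix $g=R(\bar{g})$, whereas here $W$ is only assumed unimodular and congruent to $R(\bar{g})$ mod $p$, so I must use unimodularity for the integrality of $W^{-1}k$ and the mod-$p$ congruence for its residue class. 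That these are exactly the two hypotheses suggests they are not merely convenient but essentially forced.
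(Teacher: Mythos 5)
Your proposal is correct and follows essentially the same route as the paper's proof (given in Appendix~\ref{app:even_self-dual} for the more general prime-power case): reduce everything to the bilinear form $\frac{1}{p}n^TWn'$, use the $c=\vec{\bar{0}}$ sublattice plus unimodularity of $W$ to force integrality of the dual vector's coefficients, use $W\equiv R(\bar{g})\bmod p$ to identify the residual condition with $c\circ c'=\bar{0}$, and handle evenness via the even diagonal of $W$ together with $\gcd(2,p)=1$. The only cosmetic slip is writing $p\mathbb{Z}^{2n}\subset\Lambda_p(\mathcal{C};V)$ where the sublattice is really $pV\mathbb{Z}^{2n}$; the computation you perform is the intended one.
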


\begin{proof}
	(A proof for the more general case is in Appendix \ref{app:even_self-dual}.)
\end{proof}

Using this proposition, we can construct Narain CFTs from self-dual codes as follows.
\begin{prop} \label{prop:code_CFT_rel_general}
	A self-dual code $\mathcal{C}\subset F_p^{2n}$ with the metric $\bar{g}$ has a generator matrix
	\begin{equation}
		G = \bar{S} \begin{pmatrix} \bar{X} \\ \bar{I} \end{pmatrix}
	\end{equation}
	where $\bar{X}$ is an $n\times n$ matrix s.t. $\bar{X}+\bar{X}^T=\bar{O}$ and $\bar{S}$ is a $2n\times 2n$ matrix that has one $\bar{1}$ and $2n-1$ $\bar{0}$'s in each row and satisfies $\bar{S}^T\bar{g}\bar{S}=\bar{g}$. In addition, if an $n\times n$ matrix $\gamma$ and an $n\times n$ antisymmetric matrix $B$ satisfy
	\begin{equation} \label{eq:gen_mat_rel}
		VS \begin{pmatrix} pI&X' \\ 0&I \end{pmatrix} = S \begin{pmatrix}
			\frac{\sqrt{2}}{r}\gamma^{-1} & \frac{r}{\sqrt{2}}\gamma^{-1}B \\
			0 & \frac{r}{\sqrt{2}} \gamma^T
		\end{pmatrix}
	\end{equation}
	where $S=R(\bar{S})$ and $X'$ is an antisymmetric matrix s.t. $X'_{ij}=-X'_{ji}=R(\bar{X}_{ij})$ for $i\geq j$, then
	\begin{equation} \label{eq:lat_rel}
		\Lambda_p(\mathcal{C};V) = S \Lambda_N(r,\gamma\gamma^T,B).
	\end{equation}
\end{prop}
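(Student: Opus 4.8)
The plan is to mirror the proof of Proposition \ref{prop:code_CFT_rel}: exhibit a single real matrix that serves simultaneously as an integral generator matrix for both $\Lambda_p(\mathcal{C};V)$ and $S\Lambda_N(r,\gamma\gamma^T,B)$, so that the two lattices coincide. The first step is to strip off $V$. Writing the defining sum as a matrix product, $\sum_{i=1}^{2n}(R(c_i)+pm_i)v_i = V(R(c)+pm)$, gives $\Lambda_p(\mathcal{C};V)=VL$, where $L:=\{R(c)+pm \mid c\in\mathcal{C},\,m\in\mathbb{Z}^{2n}\}\subset\mathbb{Z}^{2n}$ is the Construction-A lattice attached to $\mathcal{C}$. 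Thus it suffices to produce an integral generator matrix for $L$ and then left-multiply by $V$.

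The \emph{key lemma} is that $N:=S\begin{pmatrix}pI & X' \\ 0 & I\end{pmatrix}$ is a generator matrix of $L$, the exact analogue of $G_\Lambda$ in Proposition \ref{prop:code_CFT_rel}. I would prove this in two moves. First, because $S=R(\bar S)$ is a permutation matrix, it preserves $\mathbb{Z}^{2n}$ and $p\mathbb{Z}^{2n}$ and commutes with the representative map, $R(\bar S\,\tilde c)=S\,R(\tilde c)$; hence $L=S\tilde L$, where $\tilde L$ is the Construction-A lattice of the code $\tilde{\mathcal{C}}$ generated by $\begin{pmatrix}\bar X\\\bar I\end{pmatrix}$, and it is enough to show $\begin{pmatrix}pI & X'\\0&I\end{pmatrix}$ generates $\tilde L$. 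Second, since codewords of $\tilde{\mathcal{C}}$ are $\begin{pmatrix}\bar X a\\ a\end{pmatrix}$ and $X'\equiv\bar X \bmod p$, one identifies $\tilde L=\{(u,v)\in\mathbb{Z}^{2n}\mid u\equiv X'v \bmod p\}$; writing $v=b$ and $u=X'b+pa$ with $a,b\in\mathbb{Z}^n$ shows this set equals the column span of $\begin{pmatrix}pI & X'\\0&I\end{pmatrix}$. This matching is where care is needed: one must track the representative map $R$, the congruence lift $X'$ of $\bar X$, and the permutation $S$ simultaneously, which is the only genuinely fiddly part of the argument.

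With the lemma in hand the conclusion is immediate. We have $\Lambda_p(\mathcal{C};V)=VL=VN\,\mathbb{Z}^{2n}=VS\begin{pmatrix}pI & X'\\0&I\end{pmatrix}\mathbb{Z}^{2n}$, and the hypothesis \eqref{eq:gen_mat_rel} rewrites this generator exactly as $M:=S\begin{pmatrix}\frac{\sqrt2}{r}\gamma^{-1} & \frac{r}{\sqrt2}\gamma^{-1}B\\0&\frac{r}{\sqrt2}\gamma^T\end{pmatrix}$. Finally, $\begin{pmatrix}\frac{\sqrt2}{r}\gamma^{-1} & \frac{r}{\sqrt2}\gamma^{-1}B\\0&\frac{r}{\sqrt2}\gamma^T\end{pmatrix}$ is precisely the generator matrix \eqref{eq:lambda_N_gen_mat} of $\Lambda_N(r,\gamma\gamma^T,B)$, with $\gamma\gamma^T=G$ via $(\gamma)_{i\mu}=e_i^\mu$; hence $M$ generates $S\Lambda_N(r,\gamma\gamma^T,B)$ and the two lattices are equal, establishing \eqref{eq:lat_rel}.

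The genuinely new input beyond Proposition \ref{prop:code_CFT_rel} is only bookkeeping: the flat-case matrix $\begin{pmatrix}\sqrt p\,I & B'/\sqrt p\\0& I/\sqrt p\end{pmatrix}$ is replaced by the two separate factors $V$ and $\begin{pmatrix}pI & X'\\0&I\end{pmatrix}$, with the interface encoded in \eqref{eq:gen_mat_rel}. I expect the main obstacle to be purely the lemma's congruence matching described above rather than any conceptual difficulty; once $N$ is shown to generate $L$, the hypotheses \eqref{eq:gen_mat_rel} and the identity \eqref{eq:lambda_N_gen_mat} do the rest mechanically. One should also note in passing that the existence of the asserted generator matrix $G=\bar S\begin{pmatrix}\bar X\\\bar I\end{pmatrix}$ with $\bar X+\bar X^T=\bar O$ is supplied by the row-reduction argument and self-duality computation already carried out before Proposition \ref{prop:code_CFT_rel}.
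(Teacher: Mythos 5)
Your proposal is correct and takes essentially the same route as the paper, which simply notes that, as in Proposition \ref{prop:code_CFT_rel}, both sides of \eqref{eq:gen_mat_rel} are generator matrices of the respective lattices in \eqref{eq:lat_rel}. Your factorization $\Lambda_p(\mathcal{C};V)=VL$ with $L$ generated by $S\bigl(\begin{smallmatrix}pI&X'\\0&I\end{smallmatrix}\bigr)$ is just a cleaner, fully detailed version of the column-manipulation argument the paper reuses from Proposition \ref{prop:code_CFT_rel}.
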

\begin{proof}
	As in Proposition \ref{prop:code_CFT_rel}, both sides of \eqref{eq:gen_mat_rel} are generator matrices of lattices in \eqref{eq:lat_rel}.
\end{proof}
Note again that $S$ in \eqref{eq:lat_rel} does not affect even self-duality of the lattice since $S^TgS=g$. This is a generalization of Proposition \ref{prop:code_CFT_rel}, which is the case $V=I_{2n}/\sqrt{p}$. Since the form of the corresponding Narain CFT depends strongly on $V$ and it is difficult to handle collectively, we will see some specific examples that may be significant. For simplicity, we only consider codes with $\bar{S}=\bar{I}$, i.e. $S=I$ in \eqref{eq:gen_mat_rel}, \eqref{eq:lat_rel}. The results can be easily extended to other cases.

First, we set
\begin{equation}
	V_a = \frac{1}{\sqrt{p}} \begin{pmatrix} aI&0 \\ 0&\frac{1}{a}I \end{pmatrix},\, a\in\mathbb{R},
\end{equation}
which satisfy $pV_a^TgV_a=g=R(\bar{g})$. A self-dual code $\mathcal{C}$ with a generator matrix $G=(\bar{X}^T,\bar{I})^T$ satisfies
\begin{equation}
	\Lambda_p(\mathcal{C};V_a) = \Lambda_N(\sqrt{2/p}/a,I,a^2X').
\end{equation}
The spectral gap of the corresponding Narain CFT can be written as
\begin{equation}
	\Delta = \frac{1}{2p} \min\{D_a(\mathcal{C})^2, a^2p^2, \frac{1}{a^2}p^2 \}
\end{equation}
where
\begin{gather}
	D_a(\mathcal{C}) = \min_{c\in\mathcal{C},c\neq0} d_a(c), \\
	d_a(c) = \sqrt{ a^2 \sum_{i=1}^{n} ( \min \{ R(c_i), R(-c_i)\})^2 + \frac{1}{a^2} \sum_{i=n+1}^{2n} (\min \{ R(c_i), R(-c_i)\})^2 } .
\end{gather}
By introducing the factor $a$ in $V$, we multiply radii and background by $1/a$ and $a^2$ in the CFT. The calculation for the spectral gap is almost same as \eqref{eq:gap_from_code}. $D_a(\mathcal{C})$ can be regarded as the error correction capability of the code when the probability of error is proportional to $a$ in the former half ($i=1,\dots,n$) and $1/a$ in the latter half ($i=n+1,\dots,2n$).

Next, we set
\begin{equation}
	V(\alpha,\beta) = \frac{1}{\sqrt{p}} \begin{pmatrix} \alpha^{-1}&\alpha^{-1}\beta \\ 0&\alpha^T \end{pmatrix}
\end{equation}
where $\alpha$ is an $n\times n$ matrix and $\beta$ is an $n\times n$ antisymmetric matrix, which satisfy $pV(\alpha,\beta)^TgV(\alpha,\beta)=g=R(\bar{g})$. A self-dual code $\mathcal{C}$ with a generator matrix $G=(\bar{X}^T,\bar{I})^T$ satisfies
\begin{equation}
	\Lambda_p(\mathcal{C};V(\alpha,\beta)) = \Lambda_N(\sqrt{2/p},\alpha\alpha^T,X'+\beta).
\end{equation}
We can relate the code to any metric and background by varying $\alpha$ and $\beta$. However, the quantities in CFT such as the spectral gap cannot be expressed in a simple form in the language of the code and it is difficult to interpret in the context of coding theory. Thus, we should restrict our consideration to simple forms such as $I_{2n}/\sqrt{p}$ and $V_a$ in order to see ``good'' relation between error-correcting codes and Narain CFTs. 

\section{Conclusion}
We defined the correspondence between Narain CFTs with $n$ scalars and self-dual error-correcting codes on $F_p^{2n}$ through even self-dual lattices on $\mathbb{R}^{n,n}$, and derived the relation between quantities in both theories. We found that the spectral gap of the CFT and the error correction capability of the code are almost proportional and the partition function of the CFT can be expressed by the extended enumerator polynomial of the code. Using this relation, we calculated the specific spectral gaps to get the maximum value among the Narain CFTs with radii $\sqrt{2/p}$, the flat metric and integer backgrounds and compared them with the known values among all Narain CFTs.

In general, a number of elements in a finite field is a prime number or a prime power. In this paper, we dealt with prime numbers and only showed the construction in Appendix \ref{app:even_self-dual} for prime powers. We will aim to derive meaningful results for CFTs from codes on finite fields with prime power elements.

In the future, we expect to analytically impose stronger bounds on quantities in both theories such as the spectral gap than those are currently known by considering the limit $p\to\infty$. In addition, we will clearly relate symmetries in Narain CFTs to counterparts in codes and see how they appear when we vary $V$ in Proposition \ref{prop:code_CFT_rel_general}.

\acknowledgments
I am grateful to Yutaka Matsuo, Kantaro Ohmori, Kohki Kawabata, Go Noshita and Shu Shimamura for useful discussions. This research was supported by FoPM, WINGS Program, the University of Tokyo.

\appendix
\section{A proof for generator matrices of self-dual codes}
\label{app:generator_matrix}
We will prove that a self-dual code $\mathcal{C}\subset F_p^{2n}$ with the metric $\bar{g}=\begin{pmatrix} \bar{0}&\bar{I}\\ \bar{I}&\bar{0} \end{pmatrix}$ has a generator matrix $G$ that can be written as
\begin{equation}
	G = \bar{S} \begin{pmatrix} \bar{I} \\ \bar{B} \end{pmatrix}
\end{equation}
where $\bar{B}$ is an $n \times n$ matrix and $\bar{S}$ is a $2n\times 2n$ matrix that has one $\bar{1}$ and $2n-1$ $\bar{0}$'s in each row and satisfies $\bar{S}^T\bar{g}\bar{S}=\bar{g}$.

Let $G'$ be any generator matrix of $\mathcal{C}$. Note that the matrix corresponds to the same code even if we multiply any column by any value without $\bar{0}$, add one column to another column and swap columns. In addition, $\bar{S}$ corresponds to swapping rows as $i$-th $\leftrightarrow$ $(i+n)$-th or $i,(i+n)$-th $\leftrightarrow$ $j,(j+n)$-th for generator matrices. We can basically get $G$ from $G'$ by the following method.
	\begin{itemize}
		\item[1-1] (1st column) $\rightarrow$ (1st column)$/G'_{11}$
		\item[1-2] For $j\in\{1,\dots,n\}\setminus\{1\}$, ($j$-th column) $\rightarrow$ ($j$-th column) $-$ (1st column)$\times G'_{1j}$ 
		\item[2-1] (2nd column) $\rightarrow$ (2nd column)$/G'_{22}$
		\item[2-2] For $j\in\{1,\dots,n\}\setminus\{2\}$, ($j$-th column) $\rightarrow$ ($j$-th column) $-$ (2nd column)$\times G'_{2j}$
		\item[\vdots]
		\item[n-1] ($n$-th column) $\rightarrow$ ($n$-th column)$/G'_{nn}$
		\item[n-2] For $j\in\{1,\dots,n\}\setminus\{n\}$, ($j$-th column) $\rightarrow$ ($j$-th column) $-$ ($n$-th column)$\times G'_{nj}$
	\end{itemize}
All calculations are understood as on $F_p$. When the step i-2 is done, the top $i$ rows must be $(\bar{I}_{i\times i}\;\bar{O}_{i\times (n-i)})$. A problem is that it doesn't work when $G'_{ii}=\bar{0}$. If there exists a nonzero $G'_{jk}$ or $G'_{j+n,k}$ for $i\leq j,k\leq n$, then we can bring it to $G'_{ii}$ by swapping rows by $\bar{S}$ and/or columns and restart the method. If not, i.e. $G'$ has the form (for simplicity, $i-1\to i$)
	\begin{equation}
		\begin{pmatrix}
			\bar{I}_{i\times i} & \bar{O}_{i\times (n-i)} \\
			\Box_{(n-i)\times i} & \bar{O}_{(n-i)\times (n-i)} \\
			\Box_{i\times i} & \Box_{i\times (n-i)} \\
			\Box_{(n-i)\times i} & \bar{O}_{(n-i)\times (n-i)} \\
		\end{pmatrix}
	\end{equation}
	where $\Box$'s mean arbitrary matrices, we can't complete the method. However, it can't happen when the code $\mathcal{C}$ is self-dual. If $G'$ has this form,
	\begin{equation}
		(\underbrace{\bar{0},\dots,\bar{0}}_{n}, -G'_{n,1}, \cdots, -G'_{n,i}, \underbrace{\bar{0},\dots,\bar{0}}_{n-i-1}, \bar{1} )^T
	\end{equation}
	is orthogonal with the metric $\bar{g}$ to all columns of $G'$ but can't written as $G'a$ by $a\in F_p^n$. It means $\mathcal{C}^\ast \not\subset \mathcal{C}$, which contradicts $\mathcal{C}=\mathcal{C}^\ast$. Thus, we can always complete the method and get the desired generator matrix $G$.

\section{A proof of Proposition \ref{prop:lambda_p_CV_even_self-dual}}
\label{app:even_self-dual}
As mentioned in section \ref{sec:construction_general}, we will prove the more general case. 

As preparation, we introduce a finite field with prime power elements. For a prime $p$ and an integer $l\in\mathbb{N}$, $F_{p^l}=F_p[x]/(f_{p,l}(x))=\{\sum_{t=0}^{l-1} a_tx^t \mid a_t\in F_p \}$ is a finite field with $p^l$ elements where $f_{p,l}(x)$ is the Conway polynomial and $(f(x))$ is the ideal generated by $f(x)$. Furthermore, for $r\in\mathbb{Z}$ s.t. $0\leq r<l$, we define a map $P_r: F_{p^l}\to F_p$ by $P_r(\sum_{t=0}^{l-1} a_tx^t)=a_r$.

We define a lattice corresponding to an additive code $\mathcal{C}\subset F_{p^l}^{n}$ and an $nl\times nl$ matrix $V=(v_{1,0},v_{1,1},\dots, v_{n,l-1}) \in GL(nl,\mathbb{R})$ by
\begin{equation}\label{eq:lambda_pl_CV}
	\Lambda_{p^l}(\mathcal{C};V) = \left\{ \sum_{i=1}^n \sum_{t=0}^{l-1} (R(c_{i,t})+pm_{i,t})v_{i,t} \,\middle|\, c\in\mathcal{C}, m\in\mathbb{Z}^{nl} \right\} \subset \mathbb{R}^{nl}
\end{equation}
where $c_i=\sum_{t=0}^{l-1}c_{i,t}x^t, c_{i,t}\in F_p$.

A generalization of Proposition \ref{prop:lambda_p_CV_even_self-dual} is as follows.

\begin{prop}
	We define inner products on $\mathbb{R}^{nl}$ and $F_{p^l}^n$ by $\mathbb{R}^{nl}\times\mathbb{R}^{nl}\to\mathbb{R}$, $x\cdot y=x^Tgy$ where $g: nl\times nl$ symmetric matrix and $F_{p^l}^n \times F_{p^l}^n \to F_p$, $a \circ b = \sum_{r=0}^{l-1} e_r P_{r}( a^T \bar{g} b)$ where $e_r\in F_p$ and $\bar{g}: n\times n$ symmetric matrix on $F_p$. Then,
	\begin{equation}
		\Lambda_{p^l}(\mathcal{C};V) \text{ is self-dual with the metric } g
		\;\Leftrightarrow\; \mathcal{C} \text{ is self-dual with the inner product } \circ
	\end{equation}
	holds if conditions:
	\begin{itemize}
		\item $W:=pV^TgV$ is unimodular
		\item $w_{i,t,j,s} := pv_{i,t}\cdot v_{j,s} (= W_{l(i-1)+t+1,l(j-1)+s+1}) \equiv R(\bar{g}_{ij}) R(\sum e_r P_r(x^{t+s})) \mod p$
	\end{itemize}
	are satisfied. Furthermore, for $p\neq2$,
	\begin{equation}
		\Lambda_{p^l}(\mathcal{C};V) \text{ is even with the metric $g$}
		\;\Leftrightarrow\; \forall c\in\mathcal{C},\; c\circ c = \bar{0}
	\end{equation}
	holds if above two conditions and an additional condition:
	\begin{itemize}
		\item $\forall i,t,\; w_{i,t,i,t} \in 2\mathbb{Z}$
	\end{itemize}
	are satisfied. For $p=2$, without additional conditions,
	\begin{equation}
		\Lambda_{p^l}(\mathcal{C};V) \text{ is even with the metric $g$}
		\;\Leftrightarrow\; \forall c\in\mathcal{C}, \sum_{i,t,j,s} R(c_{i,t})R(c_{j,s})w_{i,t,j,s} \in 4\mathbb{Z}.
	\end{equation}	
\end{prop}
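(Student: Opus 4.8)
The plan is to follow the template of Propositions~\ref{prop:lambda_p_even} and~\ref{prop:lambda_p_self-dual}, now carried out for a general $V$ and the general pairing $\circ$. The starting observation is that every element of $\Lambda_{p^l}(\mathcal{C};V)$ has the form $x=Vn$ with $n\in\mathbb{Z}^{nl}$ and $n\equiv R(c)\bmod p$ for some $c\in\mathcal{C}$, where the reduction is componentwise under the identification $F_p^{nl}\cong F_{p^l}^n$. For two such vectors $x=Vn$ and $x'=Vn'$ one computes $x\cdot x'=n^TV^TgVn'=\frac{1}{p}n^TWn'$, which reduces all of the geometry to the integer symmetric matrix $W=pV^TgV$.

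First I would establish the key congruence
\[
	R(c)^TWR(c')\equiv c\circ c'\pmod p\qquad\text{for all }c,c'\in\mathcal{C}.
\]
Expanding $c\circ c'=\sum_r e_rP_r(c^T\bar gc')$ and using $c_ic'_j=\sum_{t,s}c_{i,t}c'_{j,s}x^{t+s}$ in $F_{p^l}$ gives $c\circ c'=\sum_{i,t,j,s}\bar g_{ij}(\sum_r e_rP_r(x^{t+s}))c_{i,t}c'_{j,s}$, and the second hypothesis $w_{i,t,j,s}\equiv R(\bar g_{ij})R(\sum_r e_rP_r(x^{t+s}))\bmod p$ says precisely that the mod-$p$ reduction of $R(c)^TWR(c')$ reproduces this expression. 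I expect this identification to be the main obstacle, since it is the one place where the $F_{p^l}$-multiplication, the projections $P_r$ and the weights $e_r$ must all be matched against the integer entries of $W$; everything after it is formal.

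For self-duality I would prove $(\Lambda_{p^l}(\mathcal{C};V))^\ast=\Lambda_{p^l}(\mathcal{C}^\ast;V)$. Given $x'$ pairing integrally with all of $\Lambda_{p^l}(\mathcal{C};V)$, write $x'=Vy'$; testing only against the sublattice $p\mathbb{Z}^{nl}$ (the $c=\bar0$ vectors) forces $Wy'\in\mathbb{Z}^{nl}$, and unimodularity of $W$ then gives $y'=W^{-1}(Wy')\in\mathbb{Z}^{nl}$. Writing a general $n$ as $R(c)+pm$, the requirement $\frac{1}{p}n^TWy'\in\mathbb{Z}$ collapses to $R(c)^TWy'\equiv0\bmod p$, i.e., by the key congruence, to $c\circ c'=\bar0$ where $c'$ is the reduction of $y'$. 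Thus $x'\in\Lambda_{p^l}(\mathcal{C}^\ast;V)$; the reverse inclusion is traced back identically, and since $V$ is invertible the map $\mathcal{C}\mapsto\Lambda_{p^l}(\mathcal{C};V)$ is injective, so self-duality of the lattice is equivalent to $\mathcal{C}=\mathcal{C}^\ast$.

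For evenness I would expand, using $n=R(c)+pm$ and the symmetry of $W$,
\[
	x\cdot x=\frac{1}{p}R(c)^TWR(c)+2R(c)^TWm+p\,m^TWm .
\]
The middle term always lies in $2\mathbb{Z}$ because $W$ is integral. For $p\neq2$ the term $p\,m^TWm$ is even for every $m$ exactly when each diagonal entry $w_{i,t,i,t}$ is even, which is the extra hypothesis; under it $R(c)^TWR(c)$ is even as well, so $\frac{1}{p}R(c)^TWR(c)\in2\mathbb{Z}$ holds iff $R(c)^TWR(c)\equiv0\bmod p$, i.e., by the key congruence, iff $c\circ c=\bar0$, which is the claimed equivalence. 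For $p=2$ the two $m$-dependent terms become $2R(c)^TWm$ and $2m^TWm$, both automatically in $2\mathbb{Z}$ with no extra hypothesis, so evenness reduces to $\frac{1}{2}R(c)^TWR(c)\in2\mathbb{Z}$, that is $\sum_{i,t,j,s}R(c_{i,t})R(c_{j,s})w_{i,t,j,s}\in4\mathbb{Z}$, which is exactly the $p=2$ statement.
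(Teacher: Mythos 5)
Your proposal is correct and follows essentially the same route as the paper's proof in Appendix B: the same reduction of all pairings to the integer matrix $W=pV^TgV$, the same key congruence $R(c)^TWR(c')\equiv c\circ c'\pmod p$ extracted from the second hypothesis, the same identification $(\Lambda_{p^l}(\mathcal{C};V))^\ast=\Lambda_{p^l}(\mathcal{C}^\ast;V)$ via unimodularity, and the same splitting of the evenness condition into the $c$-part and the $m$-part. The only difference is that you work in matrix notation where the paper writes everything out in components; the content is identical.
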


\begin{proof}
	\textbf{(self-dual)} First, we show that the dual lattice of the code is the lattice of the dual code with the same $V$, i.e. $(\Lambda_{p^l}(\mathcal{C};V))^\ast=\Lambda_{p^l}(\mathcal{C}^\ast;V)$. We prove $(\Lambda_{p^l}(\mathcal{C};V))^\ast\subset\Lambda_{p^l}(\mathcal{C}^\ast;V)$ explicitly. From the definition of the dual lattice, $x'\in(\Lambda_{p^l}(\mathcal{C};V))^\ast$ when $x\cdot x'\in\mathbb{Z}$ for all $x\in\Lambda_{p^l}(\mathcal{C};V)$. Since $v_{i,t}$ are basis of $\mathbb{R}^{nl}$, we can write $x'=\sum_{j,s}x'_{j,s}v_{j,s}$ by $x'_{j,s}\in\mathbb{R}$. Then, from the construction \eqref{eq:lambda_pl_CV}, $x\cdot x'\in\mathbb{Z}$ can be rewritten that
	\begin{equation} \label{eq:self_dual_cond}
		\sum_{i,t,j,s}(R(c_{i,t})+pm_{i,t})v_{i,t}\cdot x'_{j,s}v_{j,s}
		= \sum_{i,t,j,s} \frac{1}{p} (R(c_{i,t})+pm_{i,t})x'_{j,s}w_{i,t,j,s}
	\end{equation}
	is an integer for all $c\in\mathcal{C}$ and $m\in\mathbb{Z}^{nl}$. By considering the case $c=\vec{\bar{0}}$, $x'$ should satisfy $\sum_{i,t,j,s} m_{i,t}x'_{j,s}w_{i,t,j,s} \in\mathbb{Z}$ for all $m\in\mathbb{Z}^{nl}$, which is equivalent to $\sum_{j,s} x'_{j,s}w_{i,t,j,s} \in \mathbb{Z}$ for all $i\in\{1,\dots,n\},t\in\{0,\dots,l-1\}$. Using the first condition that $W$ is unimodular, we get $x'_{j,s} \in \mathbb{Z}$. Since any integers can be expressed as $R(c')+pm'$, we can write $x'= \sum_{j,s} (R(c'_{j,s})+pm'_{j,s})v_{j,s}$ by $c'\in F_{p^l}^n$ and $m'\in\mathbb{Z}^{nl}$, and \eqref{eq:self_dual_cond} becomes that
	\begin{equation}
	\begin{aligned}
		&\frac{1}{p} \sum_{i,t,j,s} (R(c_{i,t})+pm_{j,s})(R(c'_{j,s})+pm'_{j,s})w_{i,t,j,s} \\
		=\; & \sum_{i,t,j,s} \left( \frac{1}{p}R(c_{i,t})R(c'_{j,s})+R(c_{i,t})m'_{j,s}+m_{i,t}R(c'_{j,s}) + pm_{i,t}m'_{j,s} \right) w_{i,t,j,s}
	\end{aligned}
	\end{equation}
	is an integer. The second, third and fourth terms are always integers, thus this is satisfied if and only if
	\begin{equation}
		\sum_{i,t,j,s} R(c_{i,t})R(c'_{j,s})w_{i,t,j,s} \in p\mathbb{Z}.
	\end{equation}
	Using the second condition that $w_{i,t,j,s} \equiv R(\bar{g}_{ij}) R(\sum e_r P_r(x^{t+s})) \mod p$, we get
	\begin{equation}
		\sum_{i,t,j,s} R(c_{i,t})R(c'_{j,s})R(\bar{g}_{ij})R(\sum_r e_rP_r(x^{t+s})) \in p\mathbb{Z}.
	\end{equation}
	In the language of $F_{p}$, it can be rewritten as
	\begin{equation}
		\sum_{i,t,j,s,r} e_rP_r(c_{i,t}x^tc'_{j,s}x^s\bar{g}_{ij}) = \sum_{i,j,r} e_rP_r(c_ic'_j\bar{g}_{ij}) = c\circ c' = \bar{0}.
	\end{equation}
	From the definition of the dual code, it means $c'\in\mathcal{C}^\ast$ and thus $x'\in\Lambda_{p^l}(\mathcal{C}^\ast;V)$, which is what we wanted to show. Since the discussion can be easily traced back to prove $(\Lambda_{p^l}(\mathcal{C};V))^\ast\supset\Lambda_{p^l}(\mathcal{C}^\ast;V)$, we can conclude that $(\Lambda_{p^l}(\mathcal{C};V))^\ast=\Lambda_{p^l}(\mathcal{C}^\ast;V)$. Therefore, $\Lambda_{p^l}(\mathcal{C};V)$ is self-dual if and only if $\Lambda_{p^l}(\mathcal{C};V)=\Lambda_{p^l}(\mathcal{C}^\ast;V)$, i.e. $\mathcal{C}$ is self-dual with the inner product $\circ$.

	\textbf{(even)}
	From the definition, $\Lambda_{p^l}(\mathcal{C};V)$ is even with $g$ when $x\cdot x\in2\mathbb{Z}$ for all $x\in\Lambda_{p^l}(\mathcal{C};V)$. From the construction \eqref{eq:lambda_pl_CV}, it can be rewritten that
	\begin{equation}
	\begin{aligned}
		&\sum_{i,t,j,s} (R(c_{i,t})+pm_{i,t})v_{i,t} \cdot (R(c_{j,s})+pm_{j,s})v_{j,s} \\
		= &\sum_{i,t,j,s} \left( \frac{1}{p} R(c_{i,t})R(c_{j,s}) + R(c_{i,t})m_{j,s} + m_{i,t}R(c_{j,s}) + p m_{i,t}m_{j,s} \right) w_{i,t,j,s}
	\end{aligned}
	\end{equation}
	is even. The sum of the second and third terms is always even from $w_{i,t,j,s}=w_{j,s,i,t}\in\mathbb{Z}$, thus it can be divided into
	\begin{equation}\label{eq:for_c}
		\frac{1}{p} \sum_{i,t,j,s} R(c_{i,t})R(c_{j,s})w_{i,t,j,s} \in 2\mathbb{Z}
	\end{equation}
	for all $c\in\mathcal{C}$ and
	\begin{equation}\label{eq:for_m}
		p \sum_{i,t,j,s} m_{i,t}m_{j,s}w_{i,t,j,s} \in 2\mathbb{Z}
	\end{equation}
	for all $m\in\mathbb{Z}^{nl}$.

	For $p\neq2$, we can ignore $p$ in \eqref{eq:for_m} and
	\begin{equation}
		\sum_{i,t,j,s} m_{i,t}m_{j,s}w_{i,t,j,s}
		= \sum_{(i,t)\neq(j,s)} m_{i,t}m_{j,s}w_{i,t,j,s} + \sum_{i,t} m_{i,t}^2 w_{i,t,i,t}.
	\end{equation}
	Since $w_{i,t,j,s}=w_{j,s,i,t}\in\mathbb{Z}$, the first term of the right-hand side is always even and thus \eqref{eq:for_m} is equivalent to $\sum_{i,t}m_{i,t}^2w_{i,t,i,t} \in 2\mathbb{Z}$. It is satisfied for all $m\in\mathbb{Z}^{nl}$ if and only if $w_{i,t,i,t}\in2\mathbb{Z}$ for all $i,t$. In that case, $\sum R(c_{i,t})R(c_{j,s})w_{i,t,j,s}$ is always even. Thus, \eqref{eq:for_c} becomes $\sum_{i,t,j,s} R(c_{i,t})R(c_{j,s})w_{i,t,j,s} \in p\mathbb{Z}$, which is equivalent to $c \circ c = \bar{0}$ from the discussion in the proof for ``self-dual''. Combining these results, we get an equivalence relation:
	\begin{equation}
		\Lambda_{p^l}(\mathcal{C};V) \text{ is even}
		\;\Leftrightarrow\; \forall i,t,\; w_{i,t,i,t} \in 2\mathbb{Z} \;\text{and}\; \forall c\in\mathcal{C},\; c \circ c = \bar{0} \quad (p\neq2).
	\end{equation}

	For $p=2$, \eqref{eq:for_m} is automatically satisfied. Thus, from \eqref{eq:for_c},
	\begin{equation}
		\Lambda_{p^l}(\mathcal{C};V) \text{ is even}
		\;\Leftrightarrow\; \forall c\in\mathcal{C}, \sum_{i,t,j,s} R(c_{i,t})R(c_{j,s})w_{i,t,j,s} \in 4\mathbb{Z} \quad (p=2).
	\end{equation}
\end{proof}

Construction A in \cite{2107.02816} is essentially equivalent to the case $p=2,l=2$,
\begin{gather}
	g = S \otimes \begin{pmatrix} 0&1\\ 1&0 \end{pmatrix},\quad
	R(\bar{g}) = S,\quad
	e_0 = \bar{0},\quad
	e_1 = \bar{1}, \\
	\left\{ \begin{aligned}
		&v_{i,0} = \frac{1}{3^{1/4}}(0,\dots,0,\underset{\text{i-th}}{1},0,\dots,0) \otimes (1,0) \\
		&v_{i,1} = \frac{1}{3^{1/4}}(0,\dots,0,\underset{\text{i-th}}{1},0,\dots,0) \otimes \left(-\frac{1}{2},\frac{\sqrt{3}}{2}\right)
	\end{aligned} \right. .
\end{gather}

\end{document}